\newtheorem{theorem}{\bf Theorem}
\newtheorem{lemma}{\bf Lemma}
\newcommand{\PT}{{\cal PT}}
\title{\bf Stabilization of the coupled pendula chain \\
under parametric $\mathcal{PT}$-symmetric driving force}
\author{E. Destyl$^{1}$, S.P. Nuiro$^{1}$, D.E. Pelinovsky$^{2}$, and P. Poullet$^{1}$ \\
{\small $^{1}$ LAMIA, Université des Antilles, Campus de Fouillole, F-97157 Pointe-à-Pitre Guadeloupe FWI} \\
{\small $^{2}$ Department of Mathematics, McMaster University, Hamilton, Ontario, L8S 4K1, Canada }}
\begin{document}

\maketitle

\begin{abstract}
We consider a chain of coupled pendula pairs, where each pendulum is connected to the nearest neighbors in
the longitudinal and transverse directions. The common strings in each pair are modulated periodically by an external
force. In the limit of small coupling and near the $1:2$ parametric resonance, we derive a novel system of
coupled $\mathcal{PT}$-symmetric discrete nonlinear Schr\"{o}dinger equation, which has Hamiltonian symmetry but
has no gauge symmetry. By using the conserved energy, we find the parameter range for the linear and nonlinear stability
of the zero equilibrium. Numerical experiments illustrate how destabilization of the zero equilibrium
takes place when the stability constraints are not satisfied. The central pendulum excites nearest pendula and
this process continues until a dynamical equilibrium is reached where each pendulum in the chain oscillates at a finite amplitude.
\end{abstract}

\section{Introduction}

Coupled pendula models are fundamental in the theoretical physics as they can be used for modelling of many interesting
physical phenomena, ranging from DNA dynamics to crystal structure of solid states \cite{Kivshar}. Synchronization of
coupled pendula and destabilization of their dynamics due to various parametric forces have been studied in
many details \cite{Pikovsky}. A recent progress towards analytical studies of such models complemented with robust numerical methods
was achieved by reducing the second-order Newton's equations to the amplitude equations of the
discrete nonlinear Schr\"{o}dinger (dNLS) type \cite{Kevrekidis}.

Parametric resonance in a chain of coupled pendula due to a horizontally shaken pendulum chain
was studied experimentally and analytically in \cite{Xu1,Xu2,Xu3}. Numerical approximations were employed
in these papers in order to characterize dynamics of coupled pendula in the presence of bistability.
Another example of recent studies of parametrically driven chains of coupled pendula can be found
in \cite{Susanto1,Susanto2}, where existence and stability of discrete breathers have been addressed
numerically.

It was realized in \cite{barashenkov,barashenkovInt} that the parametrically driven
coupled pendula can be analytically studied with Hamiltonian systems of the dNLS type in the presence of gains and loses.
Such systems can be formulated mathematically by using the concept of parity ($\mathcal{P}$) and time-reversal
($\mathcal{T}$) symmetries, which was used first to characterize the non-Hermitian Hamiltonians \cite{bender}
and has now been widely observed in many physical experiments \cite{benderExp,schindlerExp}.
The presence of Hamiltonian formulation for a class of $\mathcal{PT}$-symmetric dNLS equations
allows to employ methods of Hamiltonian dynamics to characterize stability and long-time dynamics
of breathers in the parametrically driven chains of coupled oscillators \cite{ChernPel1,ChernPel2}.

The main goal of our work is to derive and to study a novel model of the coupled $\mathcal{PT}$-symmetric
dNLS equations which describes parametrically driven chains of the coupled pendula pairs
connected to the nearest neighbors in the longitudinal and transverse directions. See Fig. \ref{fig-chain}
for a graphical illustration. Compared to the recent work in \cite{ChernPel1}, where a similar model was derived in the presence
of {\em direct couplings} between the two pendula in a pair, we consider {\em diffusive couplings}
between the pendula, in terminology of Chapter 8.2 in \cite{Pikovsky}. This diffusive coupling
is better suited to describe interactions between the two pendula connected to each other by a common horizontal string.

\begin{figure}[h!]
\centering
\includegraphics[scale=0.75]{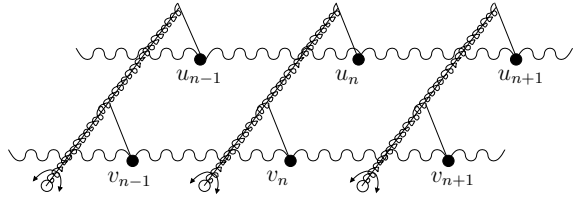}
\caption{A graphical illustration for the chain of coupled pendula connected by torsional springs,
where each pair is hung on a common horizontal string.}
\label{fig-chain}
\end{figure}

The new feature of the coupled $\mathcal{PT}$-symmetric dNLS model derived here is that
the model is Hamiltonian but not gauge-invariant, compared to the previously considered
class of $\mathcal{PT}$-symmetric models \cite{barashenkovInt,ChernPel1}. As a result,
the model admits only one integral of motion, given by the conserved energy of the system. Nevertheless,
existence of this integral of motion allows us to characterize analytically the linear and nonlinear
stability of the zero equilibrium in the system. In the stability region, the chain of coupled pendula
perform stable oscillations which are determined by the initial deviations of the pendula from
the equilibrium.

In order to characterize the destabilization of the coupled pendula chain in the instability region, we employ
numerical experiments. We will show numerically that the central pendulum excites nearest pendula and
this process continues until a dynamical equilibrium is reached where each pendulum in the chain oscillates
at a finite amplitude. While the $\ell^{\infty}$ norm of the oscillation amplitudes remains finite
at the dynamical equilibrium, we show
numerically that the $\ell^2$ norm of the oscillation amplitudes diverges in an unbounded chain of coupled pendula
as the time goes to infinity. The question of whether the oscillation amplitudes can diverge
to infinity in the $\mathcal{PT}$-symmetric dNLS models have been a subject of active research, see, e.g.,
\cite{pel1,pel3,pel4,pascal1}.

The paper is organized as follows. Section 2 describes Newton's equations of motion for the chain of coupled
pendula and the derivation of the $\mathcal{PT}$-symmetric dNLS model. Section 3 addresses linear stability of the zero equilibrium.
Section 4 contains the main results on nonlinear stability of the coupled pendula.
Section 5 reports outcomes of the numerical experiments. Section 6 concludes the paper.

\section{Derivation of the $\mathcal{PT}$-symmetric dNLS model}
\label{model}

We consider a chain of coupled pendula displayed on Fig. \ref{fig-chain}, where each pendulum is
connected to the nearest neighbors in the longitudinal and transverse directions.
In the case of the diffusive couplings between the pendula, Newton's equations of motion are given by
\begin{eqnarray}
\left\{ \begin{array}{l} \ddot{x}_n + \sin(x_n) = C \left( x_{n+1} - 2x_n + x_{n-1} \right) + D (y_n - x_n), \\
 \ddot{y}_n + \sin(y_n) = C \left( y_{n+1} - 2 y_n + y_{n-1} \right) + D (x_n-y_n),
\end{array} \right. \quad n \in \mathbb{Z}, \quad t \in \mathbb{R},
\label{oscillators}
\end{eqnarray}
where $(x_n,y_n)$ correspond to the angles in each pair of the two pendula, dots denote derivatives with respect
to time $t$, and the positive parameters $C$ and $D$ describe couplings between the nearest pendula
in the longitudinal and transverse directions, respectively.

Dynamics of coupled pendula is considered under the following simplifying assumptions:
\begin{itemize}
\item[(A1)] The coupling parameters $C$ and $D$ are small.

\item[(A2)] A uniform periodic force is applied to the common strings for each pair of coupled pendula
and the frequency of the periodic force is picked at the $1:2$ resonance with the linear frequency of each pendulum.
\end{itemize}

According to (A1), we introduce a small parameter $\mu$ such that
both $C$ and $D$ are proportional to $\mu^2$. According to (A2),
we consider $D$ to be proportional to $\cos(2 \omega t)$, where
$|\omega-1|$ is proportional to $\mu^2$. We denote the proportionality
coefficients by $\epsilon$, $\gamma$, and $\Omega$, respectively, hence
parameters $C$ and $D$ are given by
\begin{equation}
C = \epsilon \mu^2, \quad D(t) = 2 \gamma \mu^2 \cos(2 \omega t), \quad \omega^2 = 1 + \mu^2 \Omega,
\end{equation}
where $\gamma, \epsilon, \Omega$ are $\mu$-independent parameters.
In the formal limit $\mu \to 0$, the pendula are uncoupled, and
their small-amplitude oscillations can be studied with the asymptotic multi-scale expansion
\begin{equation}
        \left\{ \begin{array}{l} x_n(t) = \mu \left[ A_n(\mu^2 t) e^{i \omega t} + \bar{A}_n(\mu^2 t) e^{-i \omega t} \right] + \mu^3 X_n(t;\mu), \\[3pt]
y_n(t) = \mu \left[ B_n(\mu^2 t) e^{i \omega t} + \bar{B}_n(\mu^2 t) e^{-i \omega t} \right] + \mu^3 Y_n(t;\mu),
\end{array} \right.
\label{expansion}
\end{equation}
where $(A_n,B_n)$ are amplitudes for nearly harmonic oscillations and $(X_n,Y_n)$ are remainder terms.
Rigorous justification of the asymptotic expansions (\ref{expansion}) in a similar context has been developed in \cite{PPP},
see also \cite{barashenkovInt,ChernPel1}.

From the conditions that the remainder terms $(X_n,Y_n)$ remain bounded as the system evolves, it can be shown
by straightforward computations that the amplitudes $(A_n,B_n)$ satisfy the discrete nonlinear
Schr\"{o}dinger (dNLS) equations in the following form:
\begin{eqnarray}
\left\{ \begin{array}{l} i \frac{d A_n}{d\tau} = \epsilon \left( A_{n+1} - 2A_n + A_{n-1} \right) + \Omega A_n
+ \gamma (\bar{B}_n - \bar{A}_n) + \frac{1}{2} |A_n|^2 A_n, \\
i \frac{d B_n}{d\tau} = \epsilon \left( B_{n+1} - 2 B_n + B_{n-1} \right) + \Omega B_n
+ \gamma (\bar{A}_n - \bar{B}_n) + \frac{1}{2} |B_n|^2 B_n,
\end{array} \right. \quad n \in \mathbb{Z},
\label{dNLS}
\end{eqnarray}
where $\tau = \frac{1}{2} \mu^2 t$.
The system (\ref{dNLS}) takes the form of coupled parametrically driven dNLS equations.
There exists an invariant reduction of system (\ref{dNLS}) to the scalar dNLS equation \cite{Kevrekidis}
if
\begin{equation}
\label{synchr-1}
A_n = B_n, \quad n \in \mathbb{Z}.
\end{equation}
The reduction (\ref{synchr-1}) corresponds to the synchronization in each pair of coupled pendula with
\begin{equation}
\label{synchr-2}
x_n = y_n, \quad n \in \mathbb{Z},
\end{equation}
when the periodic driving force does not affect the dynamics of the coupled pendula chain.

Unless the reduction (\ref{synchr-1}) is imposed, the system of coupled dNLS equations (\ref{dNLS}) is not
invariant with respect to a gauge transformation of the amplitudes $(A,B)$. It is however invariant with respect
to the exchange $A \leftrightarrow B$. The system can be written in a complex Hamiltonian form:
\begin{equation}
i \frac{d A_n}{d \tau} = \frac{\partial H}{\partial \bar{A}_n}, \quad
i \frac{d B_n}{d \tau} = \frac{\partial H}{\partial \bar{B}_n}, \quad n \in \mathbb{Z},
\label{Hamiltonian-dNLS}
\end{equation}
associated with the conserved energy function
\begin{eqnarray}
\nonumber
H(A,B) & = & \sum_{n \in \mathbb{Z}} \frac{1}{4} (|A_n|^4 + |B_n|^4) + \Omega (|A_n|^2 + |B_n|^2) + \gamma ( A_n B_n + \bar{A}_n \bar{B}_n) \\
 & \phantom{t} & - \frac{1}{2} \gamma (A_n^2 + \overline{A}_n^2 + B_n^2 + \overline{B}_n^2) - \epsilon |A_{n+1}-A_n|^2 - \epsilon |B_{n+1}-B_n|^2.
\label{energy-dNLS}
\end{eqnarray}
The Hamiltonian structure (\ref{Hamiltonian-dNLS}) and the conserved energy function (\ref{energy-dNLS}) are inherited
from the Hamiltonian structure of the Newton's equations of motion (\ref{oscillators}) after the substitution
of the asymptotic expansion (\ref{expansion}) and its truncation.

The system (\ref{dNLS}) can be cast to the form of the parity--time reversal ($\PT$) dNLS equations
\cite{barashenkovInt,ChernPel1}. Using the variables
\begin{equation}
\label{variables}
u_n := \frac{1}{4} \left( A_n - i \bar{B}_n \right), \quad v_n := \frac{1}{4} \left( A_n + i \bar{B}_n \right),
\end{equation}
the system of coupled dNLS equations (\ref{dNLS}) can be cast to the equivalent form
\begin{eqnarray}
\left\{ \begin{array}{l} i \frac{d u_n}{d \tau} = \epsilon \left( v_{n+1} - 2 v_n + v_{n-1} \right)
+ \Omega v_n + i \gamma u_n - \gamma \bar{u}_n +
                2\left[ \left( 2|u_n|^2 + |v_n|^2 \right) v_n + u_n^2 \bar{v}_n \right], \\[3pt]
i \frac{d v_n}{d \tau} = \epsilon \left( u_{n+1} - 2 u_n + u_{n-1} \right)
+ \Omega u_n - i \gamma v_n - \gamma \bar{v}_n +
2\left[ \left( |u_n|^2 + 2 |v_n|^2 \right) u_n + \bar{u}_n v_n^2 \right], \end{array} \right.
\label{PT-dNLS}
\end{eqnarray}
which is invariant with respect to the action of the parity $\mathcal{P}$ and time-reversal
$\mathcal{T}$ operators given by
\begin{equation}
\label{PT-symmetry}
\mathcal{P} \left[ \begin{array}{c} u \\ v \end{array} \right] =
\left[ \begin{array}{c} v \\ u \end{array} \right], \qquad
\mathcal{T} \left[ \begin{array}{c} u(t) \\ v(t) \end{array} \right] =
\left[ \begin{array}{c} \bar{u}(-t) \\ \bar{v}(-t) \end{array} \right].
\end{equation}
Therefore, the system (\ref{PT-dNLS}) can be referred to as the
$\mathcal{PT}$-symmetric dNLS equation. The system (\ref{PT-dNLS}) is still Hamiltonian
with a modified energy functional but it is not gauge-invariant. Unlike the works \cite{ChernPel1,ChernPel2},
where the formulation of a similar system as the $\mathcal{PT}$-symmetric dNLS equation has been explored,
we will work with the dNLS system (\ref{dNLS}) without transforming it to the $\mathcal{PT}$-symmetric
dNLS equation (\ref{PT-dNLS}).

\section{Linear stability of zero equilibrium}

Parametrically driving forces can destabilize the zero equilibrium state in the coupled pendula chain.
We shall first clarify conditions for the linear stability of the zero equilibrium.
The following lemma provides a sharp bound on the parameter $\gamma$ of the driving force which ensures that the zero
equilibrium is linearly stable.

\begin{lemma}
The zero equilibrium of the system (\ref{dNLS}) with $\epsilon > 0$
is linearly stable if $|\gamma| < \gamma_0$, where
\begin{equation}
\label{gamma-0}
\gamma_0 := \left\{ \begin{array}{ll} \frac{1}{2} (\Omega - 4 \epsilon), & \Omega > 4 \epsilon, \\
\frac{1}{2} |\Omega|, & \Omega < 0. \end{array} \right.
\end{equation}
The zero equilibrium is linearly unstable if $|\gamma| \geq \gamma_0$ or if $\Omega \in [0,4 \epsilon]$ and $\gamma \neq 0$.
\label{lemma-equilibrium}
\end{lemma}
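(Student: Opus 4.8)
The plan is to linearize the system (\ref{dNLS}) at the zero equilibrium by discarding the cubic terms $\tfrac12|A_n|^2 A_n$ and $\tfrac12|B_n|^2 B_n$, and then to exploit the translational invariance of the resulting linear lattice equations. Because the terms $\gamma(\bar B_n - \bar A_n)$ and $\gamma(\bar A_n - \bar B_n)$ involve complex conjugates, the system is \emph{not} gauge invariant, so the naive ansatz $A_n \sim e^{i\theta n + \lambda\tau}$ is not consistent on its own: the discrete Fourier transform $\hat A(\theta) = \sum_n A_n e^{-i\theta n}$ couples the mode at $\theta$ to the conjugated mode at $-\theta$, since $\widehat{\bar A}(\theta) = \overline{\hat A(-\theta)}$. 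I would therefore pass to Fourier space and, for each fixed $\theta$, close the system on the four-dimensional vector $\mathbf w = (\hat A(\theta),\, \hat B(\theta),\, \overline{\hat A(-\theta)},\, \overline{\hat B(-\theta)})^{\top}$, writing it as $i\,\dot{\mathbf w} = M(\theta)\mathbf w$. Using $\sum_n (A_{n+1} - 2A_n + A_{n-1}) e^{-i\theta n} = -4\sin^2(\theta/2)\,\hat A(\theta)$ and setting $\kappa(\theta) := \Omega - 4\epsilon\sin^2(\theta/2)$, one finds, after conjugating the evolution equations for the $-\theta$ components, the block form
\begin{equation}
M(\theta) = \begin{pmatrix} \kappa I_2 & -\gamma J \\ \gamma J & -\kappa I_2 \end{pmatrix}, \qquad J = \begin{pmatrix} 1 & -1 \\ -1 & 1 \end{pmatrix}. \nonumber
\end{equation}

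The key simplification is that $I_2$ and $J$ commute, so $M(\theta)$ block-diagonalizes in the eigenbasis of $J$. Since $J$ has eigenvalues $0$ and $2$, the $4\times4$ spectral problem reduces to the two $2\times2$ problems $\begin{pmatrix}\kappa & 0\\ 0 & -\kappa\end{pmatrix}$ and $\begin{pmatrix}\kappa & -2\gamma\\ 2\gamma & -\kappa\end{pmatrix}$, whose eigenvalues are $\pm\kappa$ and $\pm\sqrt{\kappa^2 - 4\gamma^2}$, respectively. Seeking $\mathbf w \sim e^{\lambda\tau}$ gives $M(\theta)\mathbf w = i\lambda\,\mathbf w$, so the growth rates satisfy $\lambda = -i\,m$ with $m$ an eigenvalue of $M(\theta)$; since the system is Hamiltonian, the spectrum is symmetric about the imaginary $\lambda$-axis, and spectral stability is equivalent to all eigenvalues of $M(\theta)$ being real, i.e. to $\kappa(\theta)^2 \ge 4\gamma^2$ for every $\theta$.

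It then remains to minimize $|\kappa(\theta)|$ over $\theta$. As $\theta$ sweeps $[-\pi,\pi]$, the quantity $\sin^2(\theta/2)$ runs through $[0,1]$ and $\kappa(\theta)$ fills the interval $[\Omega - 4\epsilon,\,\Omega]$; hence $\min_\theta|\kappa(\theta)|$ equals $\Omega - 4\epsilon$ when $\Omega > 4\epsilon$, equals $|\Omega|$ when $\Omega < 0$, and equals $0$ when $\Omega \in [0,4\epsilon]$ (the interval then contains the origin). Requiring $\min_\theta|\kappa| \ge 2|\gamma|$ reproduces exactly the threshold $\gamma_0$ in (\ref{gamma-0}): strict inequality $|\gamma| < \gamma_0$ gives $\kappa^2 - 4\gamma^2 > 0$ for all $\theta$, hence purely imaginary and semisimple $\lambda$ (linear stability), whereas $\Omega \in [0,4\epsilon]$ forces $\min|\kappa| = 0$ and thus instability as soon as $\gamma \neq 0$.

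The one point needing care, and the main obstacle, is the boundary case $|\gamma| = \gamma_0$: there the critical wavenumber yields $\sqrt{\kappa^2 - 4\gamma^2} = 0$, a \emph{double} eigenvalue, so neutral stability is no longer automatic. I would check that the corresponding block $\begin{pmatrix}\kappa & -2\gamma\\ 2\gamma & -\kappa\end{pmatrix}$ is then a nonzero nilpotent matrix (zero trace and zero determinant, but not the zero matrix), which produces a nontrivial Jordan block and hence algebraic growth $\sim\tau$; consequently the equilibrium is linearly \emph{unstable} at $|\gamma| = \gamma_0$. For $|\gamma| > \gamma_0$ the critical and neighbouring wavenumbers give $\kappa^2 - 4\gamma^2 < 0$, so $\lambda$ acquires a nonzero real part and the instability becomes exponential. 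Together these cases yield precisely the stability and instability statements claimed in the lemma.
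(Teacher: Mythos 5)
Your proposal is correct and follows essentially the same route as the paper: Fourier transform in $n$, reduction to a $4\times 4$ spectral problem for each $\theta$ with $\kappa = \pm(4\epsilon\sin^2(\theta/2)-\Omega)$, the factored characteristic equation $(\omega^2-\kappa^2)(\omega^2-\kappa^2+4\gamma^2)=0$, and the same case analysis on $\min_\theta|\kappa|$, including polynomial growth at $|\gamma|=\gamma_0$ and exponential instability for $|\gamma|>\gamma_0$ or $\Omega\in[0,4\epsilon]$, $\gamma\neq 0$. If anything, you are slightly more careful than the paper on two points it leaves implicit: the coupling of the mode at $\theta$ to the conjugated mode at $-\theta$ (harmless because $\kappa$ is even in $\theta$), and the explicit verification via the nonzero nilpotent $2\times 2$ block that the marginal case $|\gamma|=\gamma_0$ produces a Jordan block and hence algebraic growth.
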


\begin{proof}
Truncating the system (\ref{dNLS}) at the linear terms
and using the Fourier transform
\begin{equation}
\label{discrete-Fourier}
A_n = \frac{1}{2\pi} \int_{-\pi}^{\pi} \hat{A} e^{i n \theta} d \theta, \quad n \in \mathbb{Z},
\end{equation}
we obtain the following system of differential equations for $\hat{A}$ and $\hat{B}$
parameterized by $\theta$:
\begin{eqnarray}
\left\{ \begin{array}{l} i \frac{d \hat{A}}{d\tau} + \kappa \hat{A} =
\gamma (\hat{\bar{B}} - \hat{\bar{A}}), \\
i \frac{d \hat{B}}{d\tau} + \kappa \hat{B} =
\gamma (\hat{\bar{A}} - \hat{\bar{B}}),
\end{array} \right.
\label{dNLSlinear}
\end{eqnarray}
where $\kappa := 4 \epsilon \sin^2(\theta/2) - \Omega$. Separating the time
variable like in $\hat{A}(\tau) = \hat{a} e^{i \omega \tau}$ yields the linear homogeneous system
$$
\left[ \begin{matrix}  \kappa - \omega & \gamma & 0 & - \gamma \\
\gamma & \kappa + \omega & -\gamma & 0 \\
0 & -\gamma & \kappa - \omega & \gamma \\
-\gamma & 0 & \gamma &  \kappa + \omega \end{matrix} \right]
  \left[ \begin{array}{c} \hat{a} \\ \hat{\overline{a}} \\ \hat{b}\\ \hat{\overline{b}}
  \end{array} \right] = \left[ \begin{array}{c} 0 \\ 0 \\ 0 \\ 0 \end{array} \right],
$$
with the characteristic equation factorized in the form
\begin{equation}
\label{dispersion-relation}
(\omega^2 - \kappa^2) (\omega^2 - \kappa^2 + 4 \gamma^2) = 0.
\end{equation}
The first pair of roots is always real:
$$
\omega_1^{\pm} = \pm |\kappa| = \pm \left| 4 \epsilon \sin^2(\theta/2) - \Omega \right|.
$$
The second pair of roots is real if $|\kappa| > 2 |\gamma|$:
$$
\omega_2^{\pm} = \pm \sqrt{\kappa^2 - 4 \gamma^2} = \pm \sqrt{(4 \epsilon \sin^2(\theta/2) - \Omega)^2 - 4 \gamma^2}.
$$
This happens for $2 |\gamma| < \Omega - 4 \epsilon$ if $\Omega > 4 \epsilon$
and for $2 |\gamma| < |\Omega|$ if $\Omega < 0$. In both cases, the zero equilibrium is linearly stable.

On the other hand, for any $|\gamma| > \gamma_0$, the values of $\omega_2^{\pm}$ are purely imaginary either near
$\theta = \pm \pi$ if $\Omega > 4 \epsilon$ or near $\theta = 0$ if $\Omega < 0$. In these cases,
the zero equilibrium is linearly unstable with exponentially growing perturbations.
For $|\gamma| = \gamma_0$, the values of $\omega_2^{\pm}$ are zero either at $\theta = \pm \pi$ if $\Omega > 4 \epsilon$ or
at $\theta = 0$ if $\Omega < 0$. The zero equilibrium is linearly unstable with polynomially growing perturbations.

Finally, if $\Omega \in [0, 4\epsilon]$, there exists $\theta_0 \in [-\pi,\pi]$
such that $\kappa = 0$. Then, for any $\gamma \neq 0$, the value of $\omega_2^{\pm}$ are purely imaginary
and the zero equilibrium is linearly unstable with exponentially growing perturbations.
\end{proof}

\section{Nonlinear stability of coupled pendula}

The Cauchy problem for the system of coupled dNLS equations (\ref{dNLS}) can be posed in
sequence space $\ell^2(\mathbb{Z})$. Local existence of solutions to the Cauchy problem
in $\ell^2(\mathbb{Z})$ follows from an easy application of Picard's method. Combining with
the a priori energy estimates yields global existence of solutions.
Global existence follows by the energy method. The following lemma states the global
well-posedness result.

\begin{lemma}
\label{proposition-existence}
For every $(A^{(0)},B^{(0)}) \in \ell^2(\mathbb{Z})$, there exists a unique solution $(A,B)(\tau) \in
C^1(\mathbb{R},\ell^2(\mathbb{Z}))$ of the system of coupled dNLS equations (\ref{dNLS}) such that
$(A,B)(0) = (A^{(0)},B^{(0)})$. The unique solution depends continuously on initial data
$(A^{(0)},B^{(0)}) \in \ell^2(\mathbb{Z})$.
\end{lemma}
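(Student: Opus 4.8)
The plan is to establish local well-posedness by a contraction-mapping (Picard) argument in the Banach space $X := \ell^2(\mathbb{Z}) \times \ell^2(\mathbb{Z})$, and then to upgrade it to global well-posedness by an a priori bound on the $\ell^2$ norm. First I would recast (\ref{dNLS}) as the integral fixed-point equation $(A,B)(\tau) = (A^{(0)},B^{(0)}) - i\int_0^\tau F(A,B)(s)\,ds$, where $F$ collects the right-hand side. The key point is that every term of $F$ maps $X$ into $X$ and is locally Lipschitz. The discrete Laplacian $A \mapsto (A_{n+1}-2A_n+A_{n-1})_n$ is a bounded linear operator on $\ell^2(\mathbb{Z})$ with norm at most $4$; the term $\Omega A_n$ and the anti-linear couplings $\gamma(\bar B_n - \bar A_n)$ are likewise bounded; and the cubic term is controlled using the embedding $\ell^2(\mathbb{Z}) \hookrightarrow \ell^\infty(\mathbb{Z})$ together with the elementary inequality $\big||a|^2a-|b|^2b\big| \le C(|a|^2+|b|^2)|a-b|$, which yields $\||A|^2A-|B|^2B\|_{\ell^2} \le C(\|A\|_{\ell^2}^2+\|B\|_{\ell^2}^2)\|A-B\|_{\ell^2}$. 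Hence $F$ is Lipschitz on every ball of $X$, and Banach's fixed point theorem gives a unique maximal solution in $C^1([0,T_{\max}),X)$ together with continuous dependence on $(A^{(0)},B^{(0)})$, the latter being the standard by-product of the contraction estimate being uniform on bounded sets.

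The hard part is the passage to global existence. The natural obstacle is that, since (\ref{dNLS}) has no gauge symmetry, there is no conserved ``power'' controlling $\|A\|_{\ell^2}^2+\|B\|_{\ell^2}^2$; moreover the conserved energy (\ref{energy-dNLS}) is not coercive, as its quadratic gradient terms are sign-indefinite and its quartic part controls only the $\ell^4$, not the $\ell^2$, norm. Instead I would derive a differential inequality directly. Multiplying the first equation of (\ref{dNLS}) by $\bar A_n$, the second by $\bar B_n$, summing over $n\in\mathbb{Z}$, and taking real parts, the conservative contributions drop out: the discrete Laplacian gives a real quantity after summation by parts, $\sum_n \bar A_n(A_{n+1}-2A_n+A_{n-1}) = -\sum_n|A_{n+1}-A_n|^2$, while the $\Omega A_n$ and $\tfrac12|A_n|^2A_n$ terms are purely phase terms and contribute nothing to $\tfrac{d}{d\tau}\|A\|_{\ell^2}^2$. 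Only the anti-linear $\gamma$ terms survive, and they are quadratic in the unknowns, so that
\[
\frac{d}{d\tau}\left(\|A\|_{\ell^2}^2+\|B\|_{\ell^2}^2\right) \le 4|\gamma|\left(\|A\|_{\ell^2}^2+\|B\|_{\ell^2}^2\right).
\]

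Finally, Grönwall's inequality converts this bound into
\[
\|A(\tau)\|_{\ell^2}^2+\|B(\tau)\|_{\ell^2}^2 \le \left(\|A^{(0)}\|_{\ell^2}^2+\|B^{(0)}\|_{\ell^2}^2\right) e^{4|\gamma|\,|\tau|},
\]
so the $\ell^2$ norm of the solution stays finite on every bounded time interval, although it may grow exponentially, which is precisely the signature of the missing gauge invariance. Invoking the blow-up alternative from the local theory — if $T_{\max}<\infty$ then $\|A(\tau)\|_{\ell^2}+\|B(\tau)\|_{\ell^2}\to\infty$ as $\tau\to T_{\max}$ — this a priori bound rules out finite-time blow-up and forces $T_{\max}=+\infty$; running the same argument backwards in $\tau$ gives existence for all $\tau\in\mathbb{R}$. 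This yields the unique global solution $(A,B)\in C^1(\mathbb{R},\ell^2(\mathbb{Z}))$ with continuous dependence on initial data. I expect the main subtlety to lie not in the fixed-point step but in recognizing that the $\gamma$-terms, though they destroy conservation of the $\ell^2$ norm, still admit a linear-in-norm estimate, so that Grönwall alone suffices for global existence in the absence of any coercive conserved quantity.
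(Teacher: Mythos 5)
Your proposal is correct and follows essentially the same route as the paper's proof: a Picard/contraction argument for the integral equation in $\ell^2(\mathbb{Z})\times\ell^2(\mathbb{Z})$, using the boundedness of the discrete Laplacian and an algebra-type estimate for the cubic term, followed by the same $\ell^2$ balance identity in which only the $\gamma$ terms survive, and Gr\"{o}nwall's inequality with the same constant $4|\gamma|$ ruling out finite-time blow-up. The only cosmetic difference is that the paper invokes the Banach-algebra bound $\| A B \|_{\ell^2} \leq \| A \|_{\ell^2} \| B \|_{\ell^2}$ where you use the embedding $\ell^2(\mathbb{Z}) \hookrightarrow \ell^\infty(\mathbb{Z})$, which is an equivalent way to obtain the local Lipschitz property of the nonlinearity.
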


\begin{proof}
The Cauchy problem for the system of coupled dNLS equations (\ref{dNLS}) can be posed
by using the system of integral equations
\begin{eqnarray}
\label{integral-eq}
\left\{ \begin{array}{l}
A(\tau) = A^{(0)} -i \int_0^{\tau} F_A(A(\tau'),B(\tau')) d\tau', \\
B(\tau) = B^{(0)} -i \int_0^{\tau} F_B(A(\tau'),B(\tau')) d\tau', \end{array} \right.
\end{eqnarray}
where $(F_A,F_B)$ stand for the right-hand sides of the system (\ref{dNLS}).
Since the discrete Laplacian is a bounded operator in $\ell^2(\mathbb{Z})$ with the bound
\begin{equation}
\label{bounded}
\sum_{n \in \mathbb{Z}} |A_{n+1}-A_n|^2 \leq 4 \| A\|_{\ell^2}^2, \quad A \in \ell^2(\mathbb{Z}),
\end{equation}
and the sequence space $\ell^2(\mathbb{Z})$ forms a Banach algebra with respect to pointwise multiplication with the bound
\begin{equation}
\label{Banach}
\| A B \|_{\ell^2} \leq \| A \|_{\ell^2} \| B \|_{\ell^2}, \quad A, B \in \ell^2(\mathbb{Z}),
\end{equation}
there is a sufficiently small $\tau_0 > 0$ such that the integral equations (\ref{integral-eq})
admit a unique solution $(A,B)(\tau) \in C^0([-\tau_0,\tau_0],\ell^2(\mathbb{Z}))$
with $(A,B)(0) = (A^{(0)},B^{(0)})$ by the contraction mapping method.
By the same method, the solution depends continuously on initial data
$(A^{(0)},B^{(0)}) \in \ell^2(\mathbb{Z})$. Thanks again to the boundedness of the
discrete Laplacian operator in $\ell^2(\mathbb{Z})$,
bootstrap arguments extend this solution in $C^1([-\tau_0,\tau_0],\ell^2(\mathbb{Z}))$.

The local solution is continued globally by using the energy method.
For any solution $(A,B)(\tau)$ in $C^1([-\tau_0,\tau_0],\ell^2(\mathbb{Z}))$, we obtain
the following balance equation from system (\ref{dNLS}):
$$
\frac{d}{d\tau} \sum_{n \in \mathbb{Z}} (|A_n|^2 + |B_n|^2) = i \gamma \sum_{n \in \mathbb{Z}}
\left( 2 A_n B_n - 2 \bar{A}_n \bar{B}_n + \bar{A}_n^2 + \bar{B}_n^2 - A_n^2 - B_n^2 \right).
$$
Integrating this equation in time and applying Gronwall's inequality, we get
\begin{equation}
\label{growth-A-B}
\| A(\tau) \|_{l^2}^2 + \| B(\tau) \|_{l^2}^2 \le \left(\| A^{(0)}\|_{l^2}^2 + \|B^{(0)} \|_{l^2}^2\right)
e^{4 |\gamma \tau|}, \quad \tau \in [-\tau_0,\tau_0].
\end{equation}
Therefore $\| A(\tau)\|_{l^2}$ and $\| B(\tau)\|_{l^2}$ cannot blow up in a finite time
and so cannot their derivatives in $\tau$. Therefore,
the local solution $(A,B)(\tau) \in C^1([-\tau_0,\tau_0],\ell^2(\mathbb{Z}))$
is continued for every $\tau_0$.
\end{proof}

The bound (\ref{growth-A-B}) does not exclude a possible exponential growth of
the $\ell^2(\mathbb{Z})$ norms of the global solution $(A,B)(\tau)$ as $\tau \to \infty$.
However, thanks to coercivity of the energy function (\ref{energy-dNLS}) near the zero
equilibrium, we can still obtain a time-independent bound on the $\ell^2(\mathbb{Z})$ norm of the solution
near the zero equilibrium, provided it is linearly stable. Moreover, for $\Omega > (2|\gamma| + 4 \epsilon)$,
the global bound holds for arbitrary initial data. The following three theorems represent the corresponding
results. For simplicity, we can restrict our attention to $\gamma > 0$.

\begin{theorem}
For every $\Omega > 2\gamma + 4 \epsilon$ and every initial data $(A^{(0)},B^{(0)}) \in \ell^2(\mathbb{Z})$,
there is a positive constant $C$ such that the unique solution $(A,B)(\tau) \in
C^1(\mathbb{R},\ell^2(\mathbb{Z}))$ of the system of coupled dNLS equations (\ref{dNLS}) satisfies
\begin{equation}
\label{time-ind-bound}
\| A(\tau) \|_{\ell^2}^2 + \| B(\tau) \|_{\ell^2}^2 \leq C, \quad \mbox{\rm for every} \;\; \tau \in \mathbb{R}.
\end{equation}
\label{theorem-bound}
\end{theorem}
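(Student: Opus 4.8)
The plan is to exploit the conservation of the energy $H$ defined in (\ref{energy-dNLS}) together with a coercivity estimate valid precisely when $\Omega > 2\gamma + 4\epsilon$. Since the Hamiltonian structure (\ref{Hamiltonian-dNLS}) guarantees that $H(A(\tau),B(\tau))$ is constant in $\tau$ along the global solution provided by Lemma \ref{proposition-existence}, it suffices to bound the $\ell^2$ norm of $(A,B)$ from above by $H$, so that the time-independent quantity $H(A^{(0)},B^{(0)})$ controls $\|A(\tau)\|_{\ell^2}^2 + \|B(\tau)\|_{\ell^2}^2$ uniformly in $\tau$. Note that the quartic part $\sum_n \frac14(|A_n|^4+|B_n|^4)$ of $H$ is non-negative, so it only helps the lower bound; the whole difficulty concentrates in the quadratic part.

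First I would isolate the quadratic part $H_2(A,B)$ of (\ref{energy-dNLS}) and rewrite it in terms of the real and imaginary parts of $A_n$ and $B_n$. Writing $A_n = a_n + i a_n'$ and $B_n = b_n + i b_n'$, the pointwise (coupling-free) quadratic form decouples into a block acting on the real parts $(a_n,b_n)$ and a block acting on the imaginary parts $(a_n',b_n')$. A short computation shows that these two $2\times2$ symmetric blocks have eigenvalues $\{\Omega,\,\Omega-2\gamma\}$ and $\{\Omega,\,\Omega+2\gamma\}$, respectively, so that for $\gamma > 0$ the pointwise quadratic form is bounded below by $(\Omega - 2\gamma)(|A_n|^2+|B_n|^2)$.

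Next I would absorb the coupling terms. Using the discrete Laplacian bound (\ref{bounded}), we have $-\epsilon\sum_n |A_{n+1}-A_n|^2 \geq -4\epsilon\|A\|_{\ell^2}^2$ and likewise for $B$, whence
\[
H_2(A,B) \geq (\Omega - 2\gamma - 4\epsilon)\left(\|A\|_{\ell^2}^2 + \|B\|_{\ell^2}^2\right).
\]
Since $\Omega > 2\gamma + 4\epsilon$, the constant $c := \Omega - 2\gamma - 4\epsilon$ is strictly positive, and combining with the non-negativity of the quartic part yields $H(A,B) \geq c\left(\|A\|_{\ell^2}^2 + \|B\|_{\ell^2}^2\right)$. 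Applying this to $(A,B)(\tau)$ and using $H(A(\tau),B(\tau)) = H(A^{(0)},B^{(0)})$ gives (\ref{time-ind-bound}) with $C := c^{-1} H(A^{(0)},B^{(0)})$, which is finite because the embedding $\ell^2(\mathbb{Z}) \hookrightarrow \ell^4(\mathbb{Z})$ makes the quartic part of $H(A^{(0)},B^{(0)})$ finite.

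I expect the main obstacle to be the coercivity step, namely verifying that the minimal eigenvalue of the pointwise quadratic form is exactly $\Omega - 2\gamma$: the cross terms $\gamma(A_nB_n+\bar A_n\bar B_n)$ and $-\frac12\gamma(A_n^2+\bar A_n^2+B_n^2+\bar B_n^2)$ must be diagonalized carefully in the real and imaginary variables for the two blocks to separate cleanly. Once this is in place, the sharpness of the discrete Laplacian bound (\ref{bounded}) is exactly what produces the threshold $\Omega > 2\gamma + 4\epsilon$, matching the hypothesis of the theorem.
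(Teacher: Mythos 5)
Your proposal is correct and follows essentially the same route as the paper: conservation of the energy $H$ combined with the coercivity bound $H \ge (\Omega - 2\gamma - 4\epsilon)\left(\| A \|_{\ell^2}^2 + \| B \|_{\ell^2}^2\right)$, obtained by dropping the positive quartic terms and absorbing the discrete Laplacian via (\ref{bounded}). The only difference is cosmetic: the paper bounds the $\gamma$-cross terms with the Cauchy--Schwarz inequality, whereas you diagonalize the pointwise quadratic form exactly (eigenvalues $\Omega$ and $\Omega \pm 2\gamma$ on the real and imaginary blocks), which produces the same constant and confirms its sharpness.
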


\begin{proof}
If $\Omega > 2\gamma + 4 \epsilon$, the following lower bound for the energy function $H$ is positive:
\begin{equation}
\label{coercivity-E}
H \ge (\Omega - 2 \gamma - 4 \epsilon) \left(\| A(\tau) \|_{\ell^2}^2 + \| B(\tau) \|_{\ell^2}^2\right), \quad \tau \in \mathbb{R},
\end{equation}
where we have used the Cauchy--Schwarz inequality and the bound (\ref{bounded}) and we have dropped
the positive quartic terms from the lower bound. Since $H$ is constant in $\tau$ and is bounded
for any $(A,B)(\tau) \in C^1(\mathbb{R},\ell^2(\mathbb{Z}))$ due to the continuous embedding
\begin{equation}
\label{embedding}
\| A \|_{\ell^4} \leq \| A \|_{\ell^2}, \quad A \in \ell^2(\mathbb{Z}),
\end{equation}
the time-independent bound (\ref{time-ind-bound}) follows from the lower bound (\ref{coercivity-E}) for any $\Omega > 2\gamma + 4 \epsilon$.
\end{proof}

\begin{theorem}
For every $\Omega < -2\gamma$, there exist $\delta_0 > 0$ and $C_0 > 0$ such that
for every initial data $(A^{(0)},B^{(0)}) \in \ell^2(\mathbb{Z})$ satisfying
\begin{equation}
\label{bound-initial}
\delta := \| A^{(0)} \|_{\ell^2}^2 + \| B^{(0)} \|_{\ell^2}^2 \leq \delta_0,
\end{equation}
the unique solution $(A,B)(\tau) \in
C^1(\mathbb{R},\ell^2(\mathbb{Z}))$ of the system of coupled dNLS equations (\ref{dNLS}) satisfies
\begin{equation}
\label{final-bound}
\| A(\tau) \|_{\ell^2}^2 + \| B(\tau) \|_{\ell^2}^2 \leq C_0 \delta, \quad \mbox{\rm for every} \;\; \tau \in \mathbb{R}.
\end{equation}
\label{theorem-bound-negative}
\end{theorem}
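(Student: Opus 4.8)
The plan is to use conservation of the energy $H$ in (\ref{energy-dNLS}), but now its quadratic part is \emph{negative} definite instead of positive definite, so the direct coercivity estimate of Theorem \ref{theorem-bound} fails and a barrier argument is needed. Set $N(\tau) := \|A(\tau)\|_{\ell^2}^2 + \|B(\tau)\|_{\ell^2}^2$. First I would record sharp two-sided bounds for the quadratic part $Q := H - \tfrac14\sum_{n}(|A_n|^4 + |B_n|^4)$. Applying Cauchy--Schwarz to $\gamma(A_nB_n + \bar A_n\bar B_n) = 2\gamma\,\mathrm{Re}(A_nB_n)$ and to $-\tfrac12\gamma(A_n^2 + \bar A_n^2 + B_n^2 + \bar B_n^2)$ (equivalently, diagonalizing the per-site form, whose eigenvalues are $\Omega$ and $\Omega \pm 2\gamma$), then dropping the nonpositive hopping terms from above and bounding them via (\ref{bounded}) from below, I obtain
\[
(\Omega - 2\gamma - 4\epsilon)\,N \;\le\; Q \;\le\; (\Omega + 2\gamma)\,N .
\]
Since $\Omega < -2\gamma$, both $\mu := -(\Omega + 2\gamma)$ and $M := 2\gamma + 4\epsilon - \Omega$ are positive, with $M - \mu = 4\gamma + 4\epsilon > 0$.

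Next I would turn this into a scalar constraint on $N(\tau)$. Discarding the positive quartic and using the embedding (\ref{embedding}) in the form $\sum_n(|A_n|^4 + |B_n|^4) \le N^2$, the upper bound on $Q$ gives the lower bound $-H \ge \mu N - \tfrac14 N^2$, while the lower bound on $Q$ evaluated at $\tau = 0$ (again discarding the nonnegative quartic) gives $-H \le M\delta$ at the initial time. As $H$ is conserved, these combine into the time-independent inequality
\[
\tfrac14 N(\tau)^2 - \mu N(\tau) + M\delta \;\ge\; 0, \qquad \tau \in \mathbb{R}.
\]

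The heart of the proof is then a continuity argument. For $\delta$ so small that $M\delta < \mu^2$, the quadratic above has two real roots $N_\pm = 2\mu \pm 2\sqrt{\mu^2 - M\delta}$, and the inequality forbids $N(\tau)$ from entering the open gap $(N_-, N_+)$. At $\tau = 0$ we have $N(0) = \delta$, and since $\tfrac14\delta^2 - \mu\delta + M\delta = \tfrac14\delta^2 + (M-\mu)\delta > 0$ with $\delta$ small, it follows that $\delta < N_-$. Because $\tau \mapsto N(\tau)$ is continuous (indeed $(A,B) \in C^1(\mathbb{R}, \ell^2(\mathbb{Z}))$ by Lemma \ref{proposition-existence}), the orbit cannot jump across the forbidden gap, so $N(\tau) \le N_-$ for all $\tau$. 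Rationalizing, $N_- = 4M\delta / (2\mu + 2\sqrt{\mu^2 - M\delta}) \le (2M/\mu)\,\delta$, which yields (\ref{final-bound}) with $C_0 := 2M/\mu$ and any $\delta_0 < \mu^2/M$ (for instance $\delta_0 = \mu^2/(2M)$).

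The step I expect to be the main obstacle is precisely this loss of coercivity: the energy no longer controls $N$ on its own, so the positive quartic must be used to manufacture the forbidden annulus $(N_-, N_+)$, after which continuity traps the solution in the small well near zero. The two quantitative facts that make the trapping work --- that the initial datum obeys $\delta < N_-$ and that the gap is nonempty for $\delta < \mu^2/M$ --- both hinge on the strict separation $M > \mu$, which is exactly the linear-stability margin $\Omega < -2\gamma$ identified in Lemma \ref{lemma-equilibrium}.
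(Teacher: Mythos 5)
Your proposal is correct and follows essentially the same route as the paper: conservation of $H$, the coercivity bound $-H \ge (|\Omega|-2\gamma)\,N - \tfrac14 N^2$ (the paper's (\ref{coercivity-E-pos})), the initial-data bound $-H \le M\delta$ (the paper's $|H|\le C\delta$), and a trapping argument near zero. The only difference is expository: you spell out the continuity/barrier step with explicit roots $N_\pm$ and explicit constants $C_0 = 2M/\mu$, $\delta_0 = \mu^2/(2M)$, whereas the paper compresses this into the implicit bootstrap condition $\frac{C\delta}{|\Omega|-2\gamma-\frac14 C_0\delta} \le C_0\delta$.
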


\begin{proof}
If $\Omega < -2 \gamma$, the following lower bound for the energy function $-H$ holds:
\begin{equation}
\label{coercivity-E-pos}
-H \ge (|\Omega| - 2 \gamma) \left(\| A(\tau) \|_{\ell^2}^2 + \| B(\tau) \|_{\ell^2}^2\right)
- \frac{1}{4} \left(\| A(\tau) \|_{\ell^2}^2 + \| B(\tau) \|_{\ell^2}^2\right)^2, \quad \tau \in \mathbb{R},
\end{equation}
where we have used the Cauchy--Schwarz inequality and the bound (\ref{embedding})
and we have dropped the positive discrete Laplacian terms from the lower bound. Since $-H > 0$ is constant in $\tau$
and the initial data in (\ref{bound-initial}) is small, there exists a constant $C > 0$
independently of $\delta$, such that $|H| \leq C \delta$. Then, the lower bound (\ref{coercivity-E-pos})
yields the upper bound (\ref{final-bound}) with the choice of constant $C_0 > 0$ to satisfy
$$
\frac{C \delta}{|\Omega| - 2 \gamma - \frac{1}{4} C_0\delta} \leq C_0 \delta,
$$
which is always possible since $C$ and $C_0$ are $\delta$-independent and $\delta \leq \delta_0$
with sufficiently small $\delta_0$.
\end{proof}

\begin{theorem}
\label{theorem-finite}
If the lattice is truncated on finitely many ($N$) sites, denoted by $\mathbb{Z}_N$, then
for every $\Omega, \gamma, \epsilon$ and every initial data $(A^{(0)},B^{(0)}) \in \ell^2(\mathbb{Z}_N)$,
there is a positive constant $C_N$ such that
the unique solution $(A,B)(\tau) \in
C^1(\mathbb{R},\ell^2(\mathbb{Z}_N))$ of the system of coupled dNLS equations (\ref{dNLS})
on the truncated lattice $\mathbb{Z}_N$ satisfies
\begin{equation}
\label{time-ind-bound-truncated}
\| A(\tau) \|_{\ell^2}^2 + \| B(\tau) \|_{\ell^2}^2 \leq C_N, \quad \mbox{\rm for every} \;\; \tau \in \mathbb{R},
\end{equation}
with $C_N \to \infty$ as $N \to \infty$.
\end{theorem}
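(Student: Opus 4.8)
The plan is to run the same energy argument as in Theorems \ref{theorem-bound} and \ref{theorem-bound-negative}, but now to exploit the one feature the finite lattice supplies for free: on $\mathbb{Z}_N$ the positive quartic part of the energy $H$ in (\ref{energy-dNLS}) is coercive with respect to the $\ell^2$ norm, so no sign condition on $(\Omega,\gamma,\epsilon)$ is required. Write $Q(\tau) := \| A(\tau) \|_{\ell^2}^2 + \| B(\tau) \|_{\ell^2}^2$. Since the nonlinearity in (\ref{dNLS}) restricted to $\mathbb{Z}_N$ is a polynomial vector field on $\mathbb{C}^{2N}$, local existence and uniqueness in $C^1$ follow from the Picard iteration exactly as in Lemma \ref{proposition-existence}, and it suffices to produce a time-independent a priori bound on $Q$ in order to continue the solution for all $\tau \in \mathbb{R}$.

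First I would bound the quartic term from below in terms of $Q$. On the finite index set $\mathbb{Z}_N$ the Cauchy--Schwarz inequality gives $\| A \|_{\ell^2}^2 \le \sqrt{N}\, \| A \|_{\ell^4}^2$, hence $\| A \|_{\ell^4}^4 \ge N^{-1} \| A \|_{\ell^2}^4$, and likewise for $B$; combined with $\| A \|_{\ell^2}^4 + \| B \|_{\ell^2}^4 \ge \tfrac{1}{2} Q^2$ this yields
\begin{equation}
\frac{1}{4} \sum_{n \in \mathbb{Z}_N} \left( |A_n|^4 + |B_n|^4 \right) \ge \frac{1}{8N}\, Q^2 .
\end{equation}
Next I would estimate the remaining terms of (\ref{energy-dNLS}) from below exactly as before: the Cauchy--Schwarz inequality controls the cross terms $\gamma(A_n B_n + \bar{A}_n \bar{B}_n)$ and the quadratic terms $-\tfrac{1}{2}\gamma(A_n^2 + \bar{A}_n^2 + B_n^2 + \bar{B}_n^2)$ by $-2\gamma Q$ in total, while the discrete Laplacian bound (\ref{bounded}), valid verbatim on $\mathbb{Z}_N$, controls the coupling terms by $-4\epsilon Q$. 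Collecting these estimates gives the coercive lower bound
\begin{equation}
\label{coercive-finite}
H \ge \frac{1}{8N}\, Q(\tau)^2 + \left( \Omega - 2\gamma - 4\epsilon \right) Q(\tau), \qquad \tau \in \mathbb{R}.
\end{equation}

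Since $H$ is conserved, the right-hand side of (\ref{coercive-finite}) never exceeds the fixed value $H_0 := H(A^{(0)}, B^{(0)})$. As a function of $Q \ge 0$ this right-hand side is an upward parabola, so the constraint $\tfrac{1}{8N} Q^2 + (\Omega - 2\gamma - 4\epsilon) Q \le H_0$ confines $Q(\tau)$ to a bounded interval for every $\tau$; explicitly,
\begin{equation}
Q(\tau) \le 4N \left( 2\gamma + 4\epsilon - \Omega + \sqrt{ (\Omega - 2\gamma - 4\epsilon)^2 + \frac{H_0}{2N} }\, \right) =: C_N .
\end{equation}
The interval is nonempty and the square root is real because $Q(0)$ itself satisfies the constraint, so $C_N$ is a finite positive constant and (\ref{time-ind-bound-truncated}) follows. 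Inspecting $C_N$ then exhibits its growth: the coefficient of $Q^2$ in (\ref{coercive-finite}) is of order $N^{-1}$, so in the generic regime $\Omega - 2\gamma - 4\epsilon \le 0$, which in particular covers every linearly unstable case of Lemma \ref{lemma-equilibrium}, the larger root is of order $N$ and $C_N \to \infty$ as $N \to \infty$.

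The main obstacle, and the reason this result is separated from Theorems \ref{theorem-bound} and \ref{theorem-bound-negative}, is precisely that for general $(\Omega,\gamma,\epsilon)$ the energy $H$ is not sign-definite near the origin, so coercivity cannot be extracted from the quadratic part of $H$ as it was there. The finite-dimensional equivalence of the $\ell^2$ and $\ell^4$ norms is what rescues the argument by letting the positive quartic term dominate at large amplitude, but this comes at the unavoidable cost of the $N^{-1}$ prefactor. That same prefactor is what prevents the bound from surviving the limit $N \to \infty$, consistent with the numerically observed divergence of the $\ell^2$ norm on the unbounded chain.
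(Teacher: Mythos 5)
Your proof is correct and takes essentially the same approach as the paper: both arguments use the finite-lattice norm equivalence $\| A \|_{\ell^2} \leq N^{1/4} \| A \|_{\ell^4}$ so that the positive quartic part of the conserved energy $H$ dominates the indefinite quadratic part, producing an $N$-dependent a priori bound. The only difference is presentational—you phrase the coercivity directly in the $\ell^2$ quantity $Q$ and solve the quadratic inequality explicitly, whereas the paper states the lower bound in terms of $\ell^4$ norms; your explicit formula for $C_N$, and the remark that it remains bounded when $\Omega > 2\gamma + 4\epsilon$, in fact sharpen the paper's closing assertion that $C_N \to \infty$.
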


\begin{proof}
If the lattice is truncated on $N$ sites, one can use the bound
\begin{equation}
\label{embedding-reversed}
\| A \|_{\ell^2} \leq N^{1/4} \| A \|_{\ell^4}, \quad A \in \ell^4(\mathbb{Z}_N),
\end{equation}
and obtain a different lower bound for the energy function $H$:
$$
H \ge \frac{1}{4} \left(\| A(\tau) \|_{\ell^4}^4 + \| B(\tau) \|_{\ell^4}^4\right) - (|\Omega| + \gamma + 4 \epsilon) N^{1/2}
\left(\| A(\tau) \|_{\ell^4}^2 + \| B(\tau) \|_{\ell^4}^2\right), \quad \tau \in \mathbb{R}.
$$
Since $H$ is constant in $\tau$, the time-independent
bound (\ref{time-ind-bound-truncated}) follows from the lower bound above with $C_N < \infty$.
However, $C_N \to \infty$ as $N \to \infty$.
\end{proof}

\section{Numerical experiments}

We present here numerical approximations of dynamics of the dNLS system (\ref{dNLS}). For simplicity,
all simulations correspond to the choice $\gamma = 1$ and $\epsilon = 1$. The lattice is truncated
on $N$ sites with an odd choice for $N$. The resulting finite system of differential equations is approximated by
the MATLAB solver {\em ode45} with relative tolerance set at $10^{-7}$ and absolute tolerance set at $10^{-9}$.

\begin{figure}[htb]
\begin{center}
\includegraphics[height=5cm]{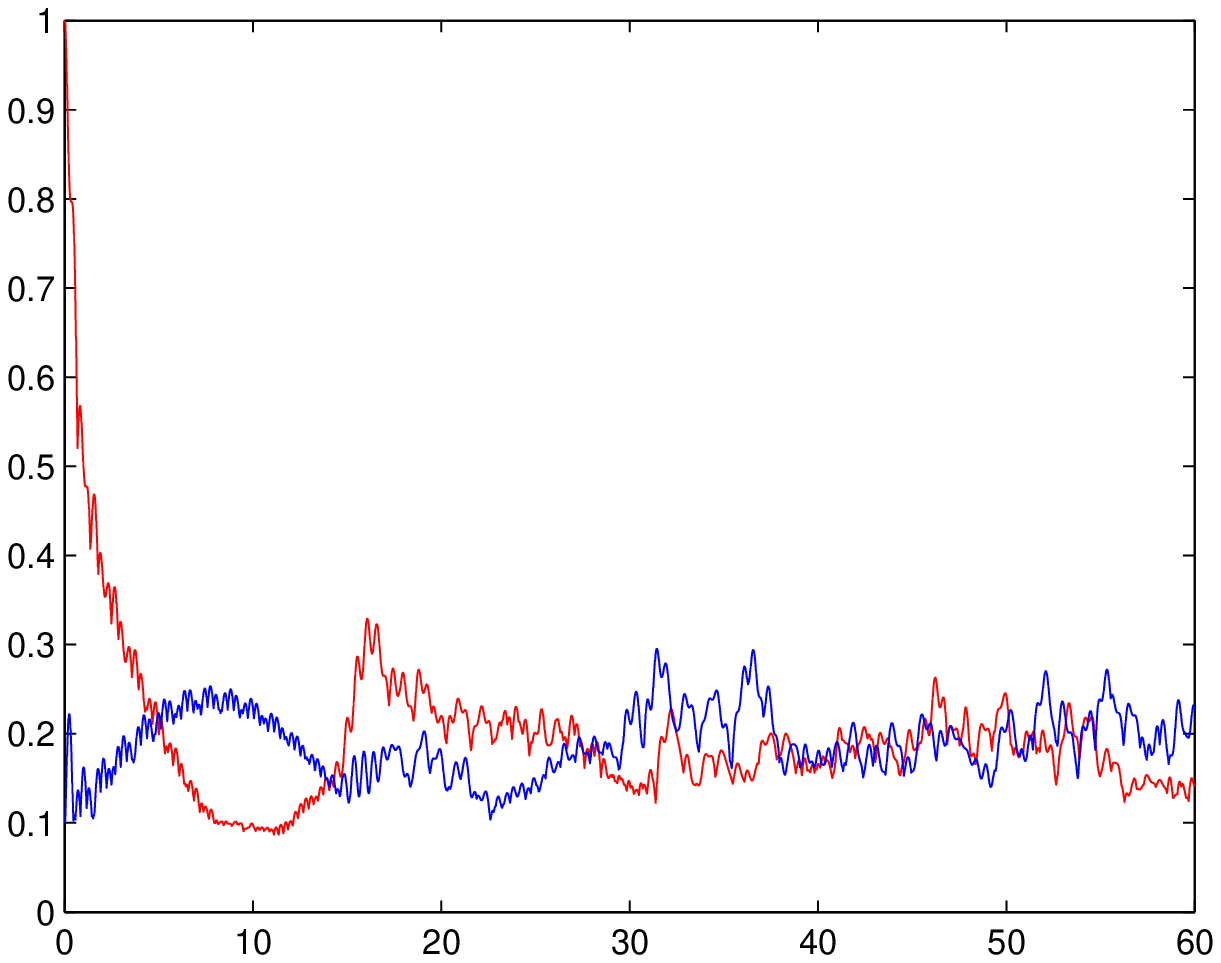}\includegraphics[height=5cm]{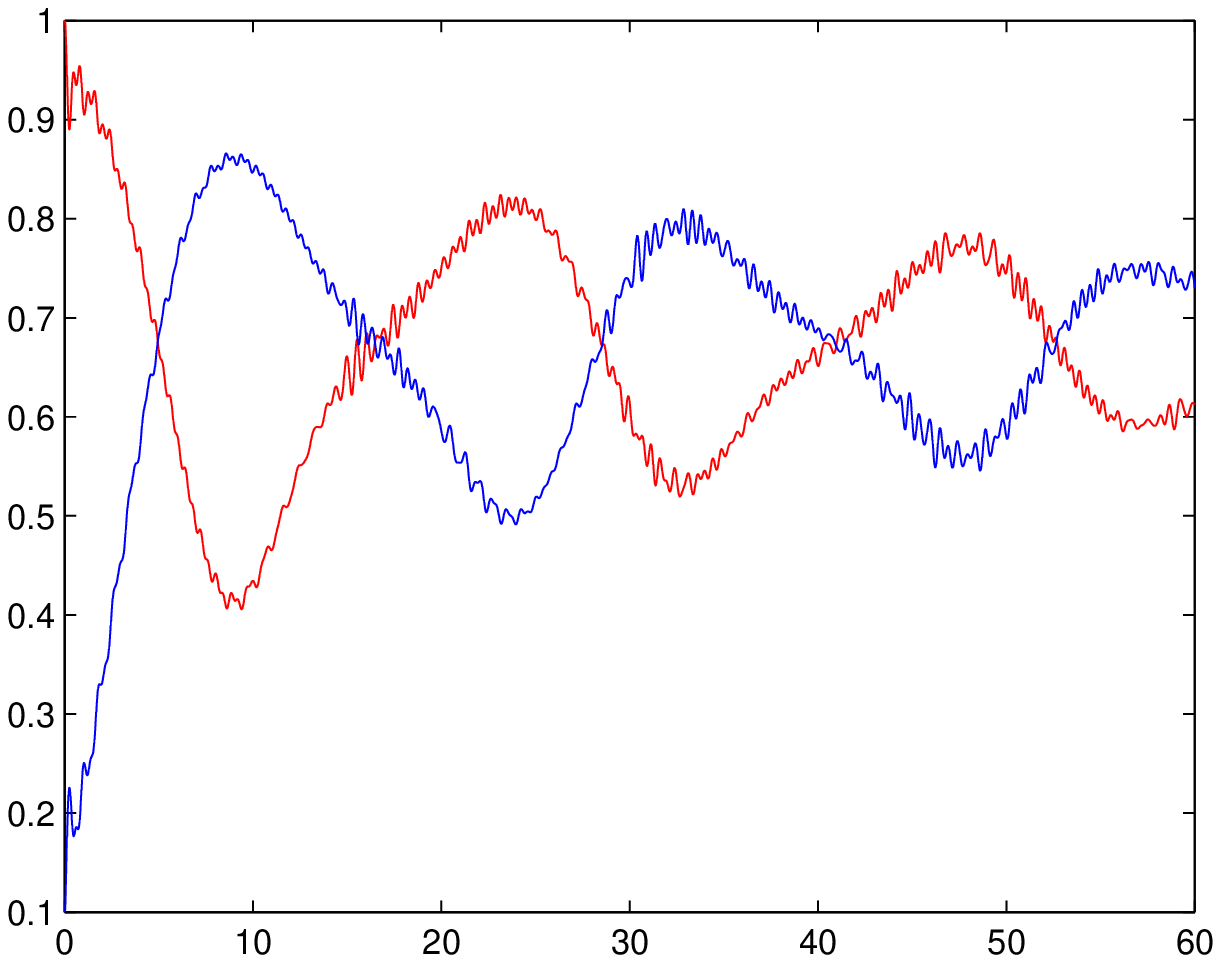} \\
\includegraphics[height=5cm]{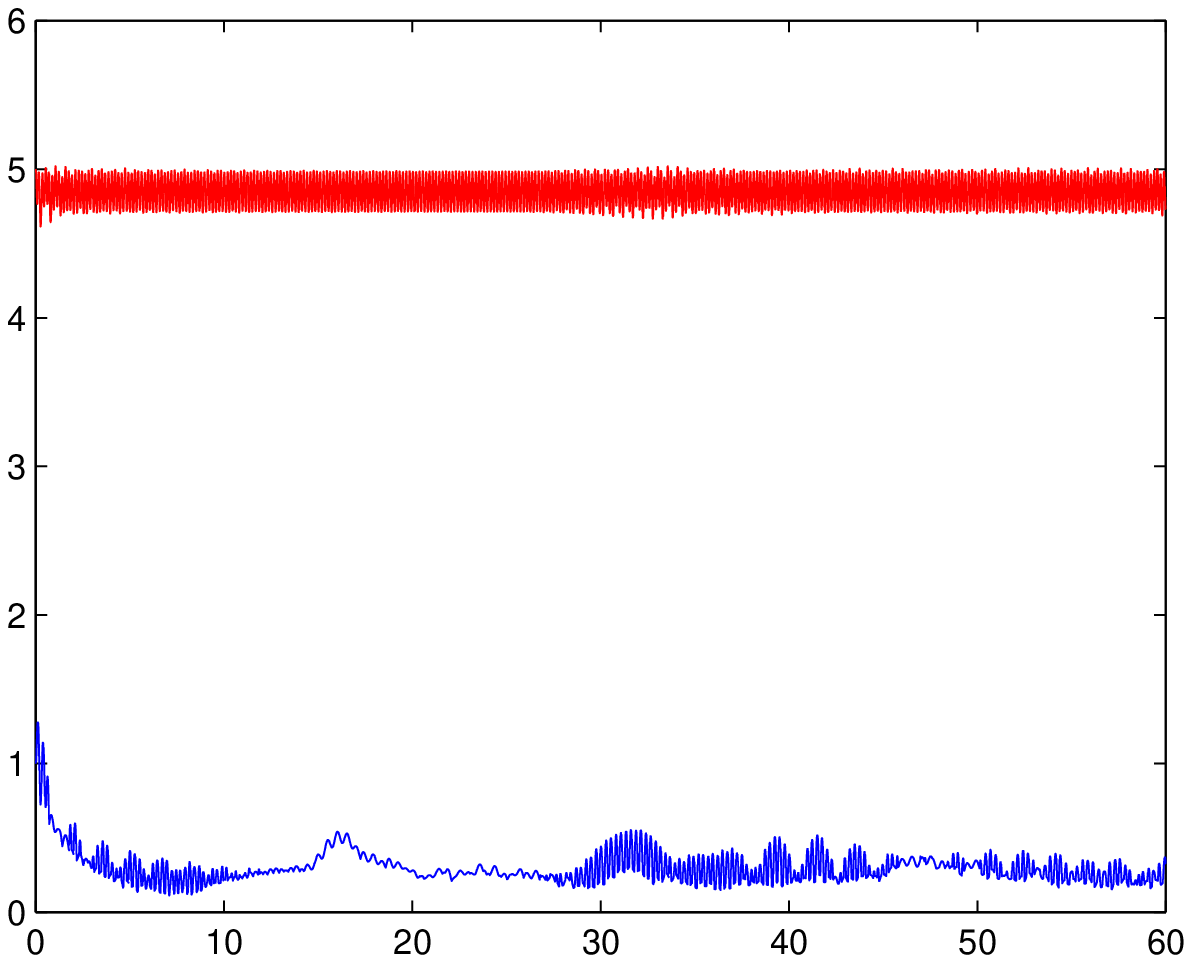}\includegraphics[height=5cm]{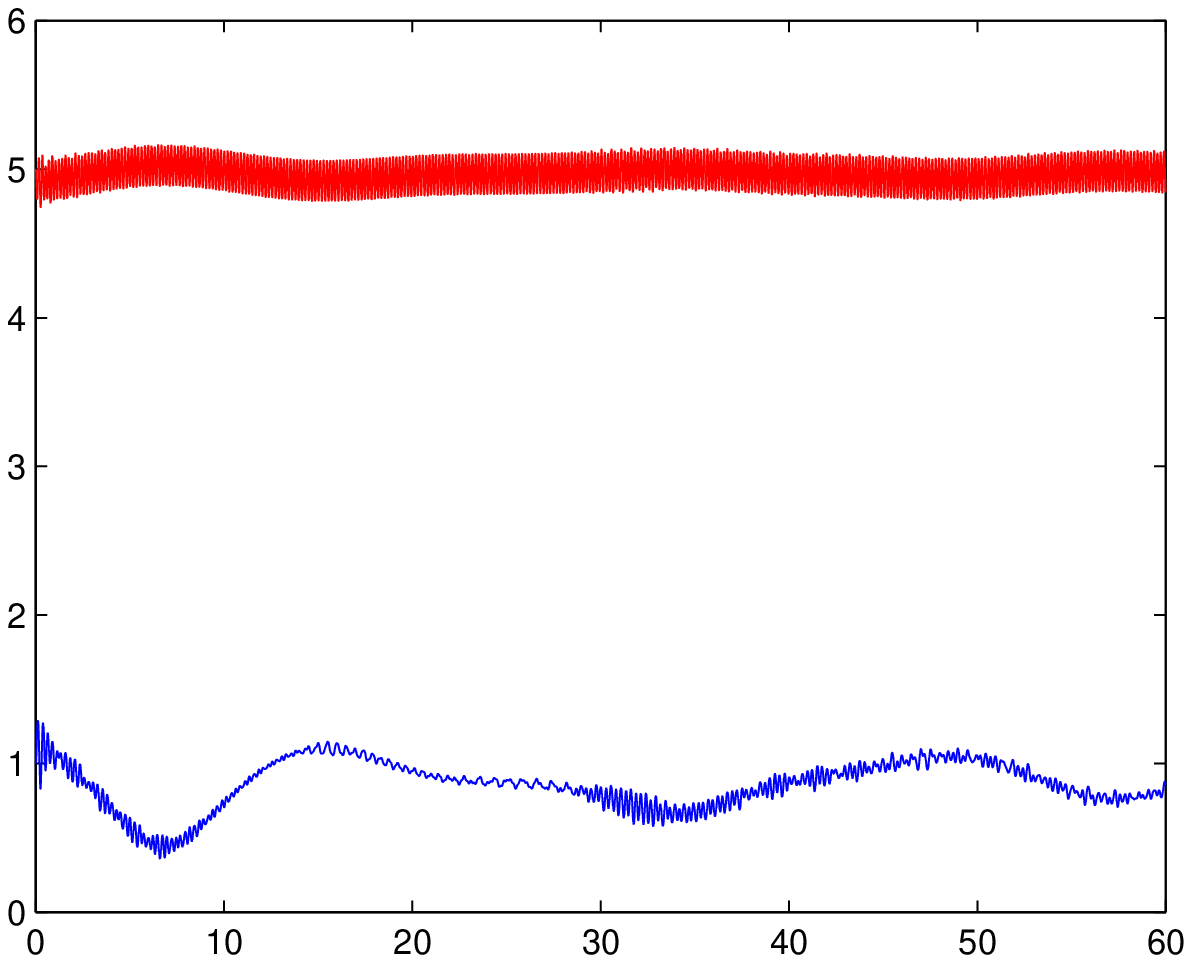}
\end{center}
\caption{$\ell^{\infty}$ (left) and $\ell^2$ (right) norms of the solution for $A$ (red) and $B$ (blue) for $\Omega = 8$:
upper panels for $(A_0,B_0) = (1,0.1)$ and lower panels for $(A_0,B_0) = (5,1)$.}
\label{fig-1}
\end{figure}

Figures \ref{fig-1}--\ref{fig-3} show the dependence of the $\ell^{\infty}$ (left) and $\ell^2$ (right) norms
of the components $A$ (red) and $B$ (blue) in the chain of $N = 61$ oscillators
for three characteristic values of the parameter $\Omega$. The upper panels show computations
for the initial data which are nonzero on the central site with $(A_0,B_0) = (1,0.1)$.
The lower panels show computations with $(A_0,B_0) = (5,1)$.
As a measure of the numerical error, we control conservation of energy $H$ in (\ref{energy-dNLS}).
We have observed that the energy is conserved up to the order of $10^{-6}$ for smaller initial data
(upper panels) and up to the order of $10^{-4}$ for larger initial data (lower panels).

For $\Omega = 8 > 2 \gamma + 4 \epsilon$,
the result of Theorem \ref{theorem-bound} provides a global bound on the $\ell^2(\mathbb{Z})$ norm of the solution.
This is confirmed by the numerical simulation on Figure \ref{fig-1}.
The upper panels show that for smaller initial conditions,
the central pendulum excites other pendula in the chain, this process
results in the decrease of the oscillation amplitude of the central pendulum.
While the $\ell^2$ norm oscillates between the modes $A$ and $B$, the $\ell^{\infty}$ norms of both modes $A$ and $B$
are comparable and do not change much in the time evolution after an initial time interval.
The lower panels show that for larger initial conditions, dynamics of the $A$ mode is effectively separated
from dynamics of the $B$ modes and the central pendulum does not excite large oscillations of other pendula.

\begin{figure}[htb]
\begin{center}
\includegraphics[height=5cm]{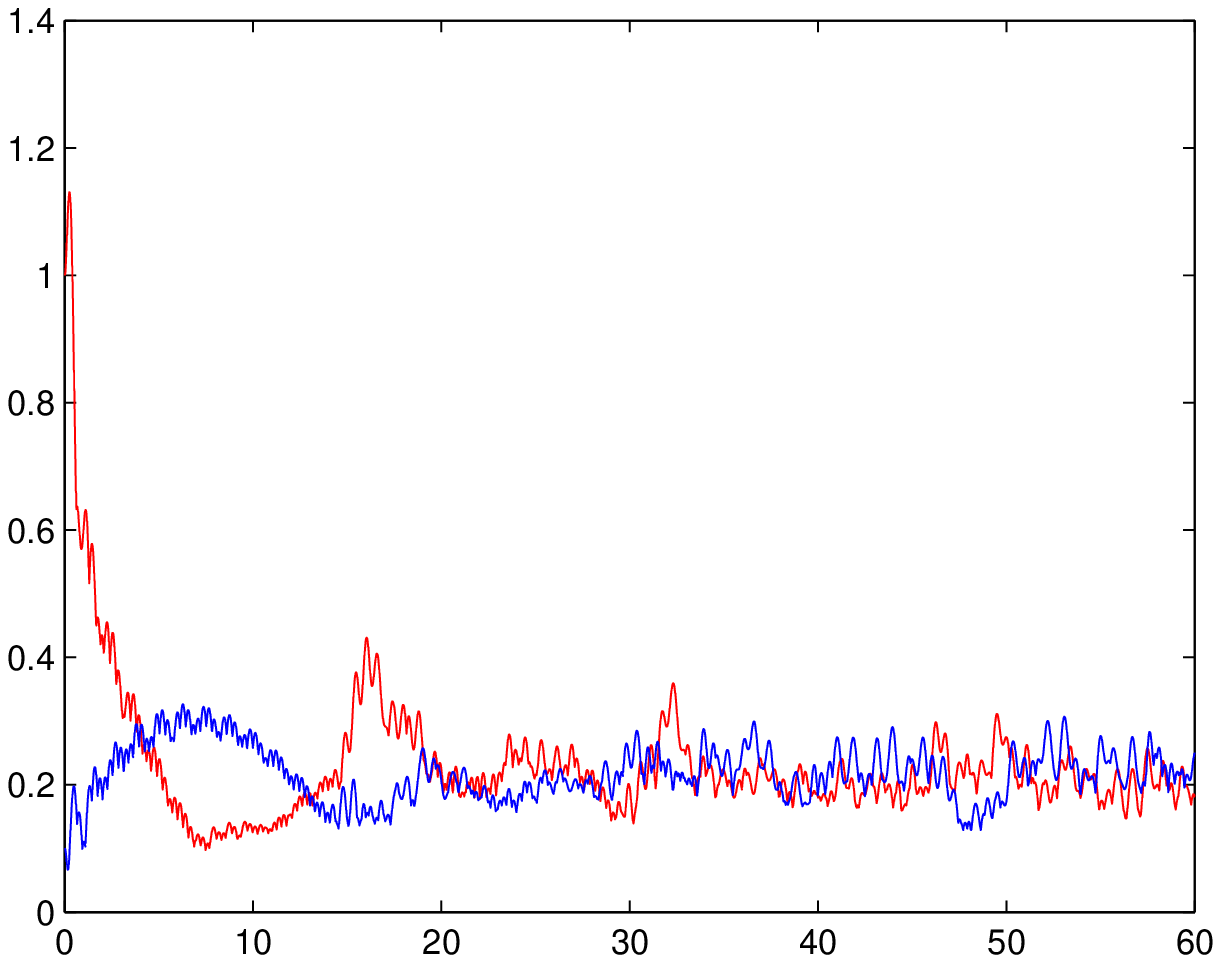} \includegraphics[height=5cm]{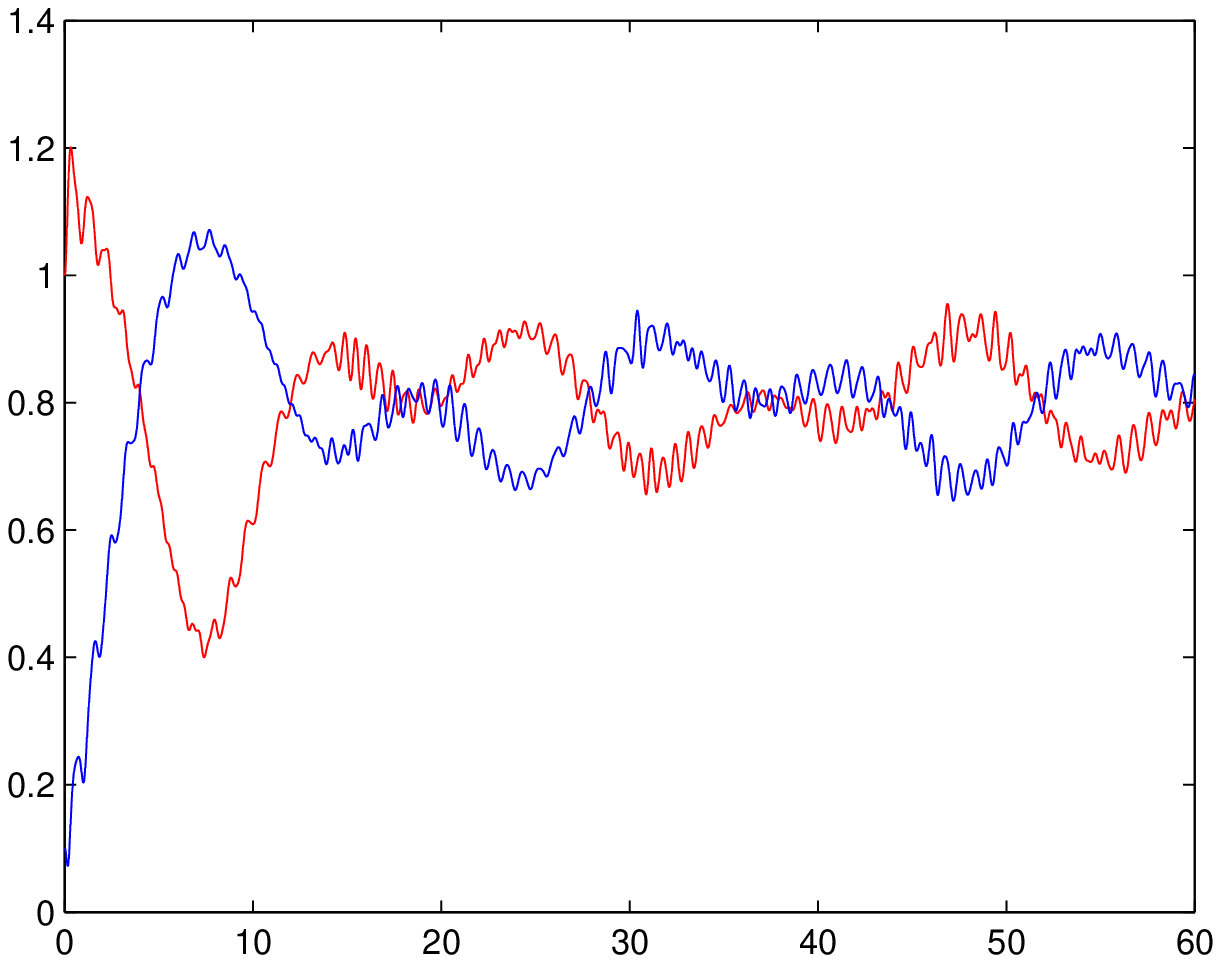}\\
\includegraphics[height=5cm]{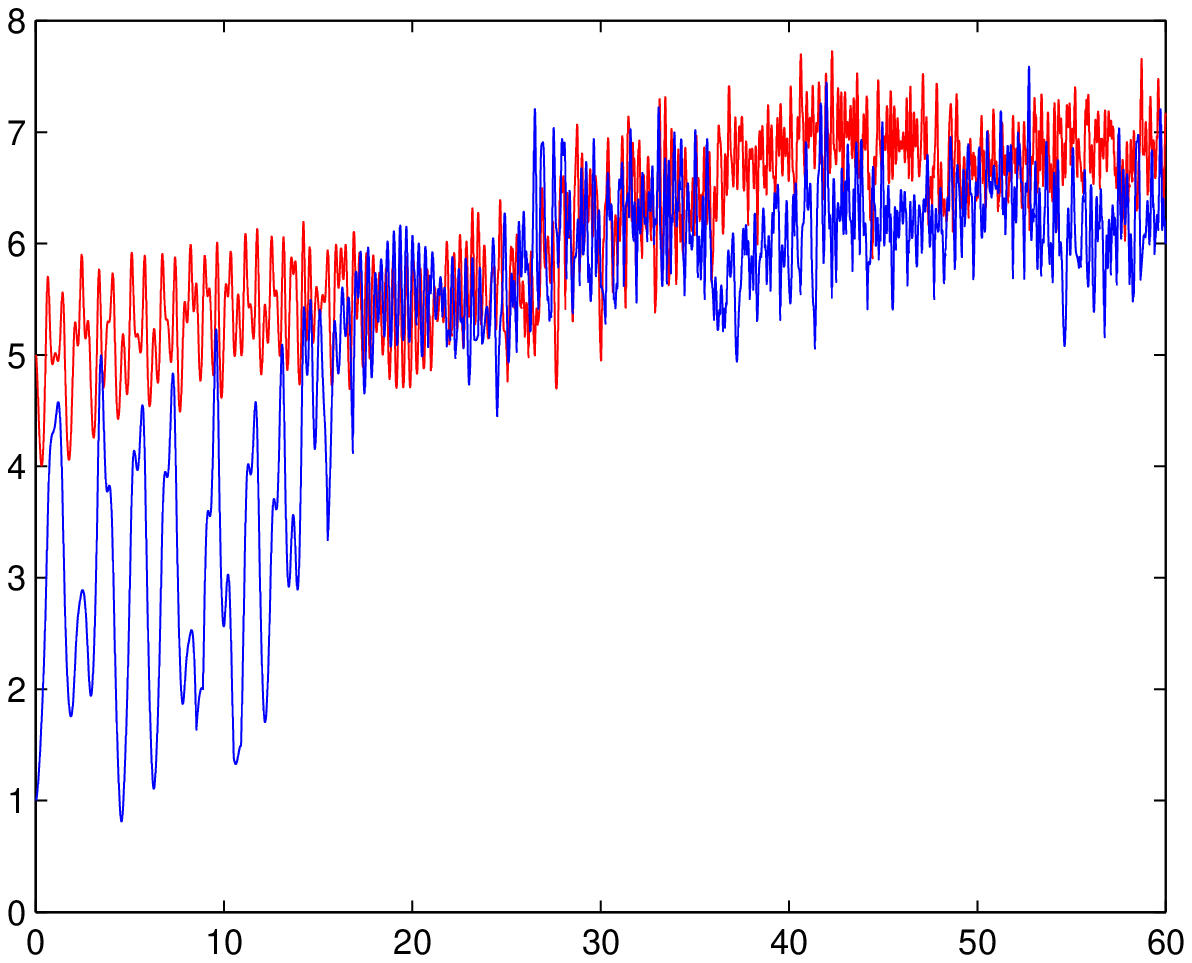} \includegraphics[height=5cm]{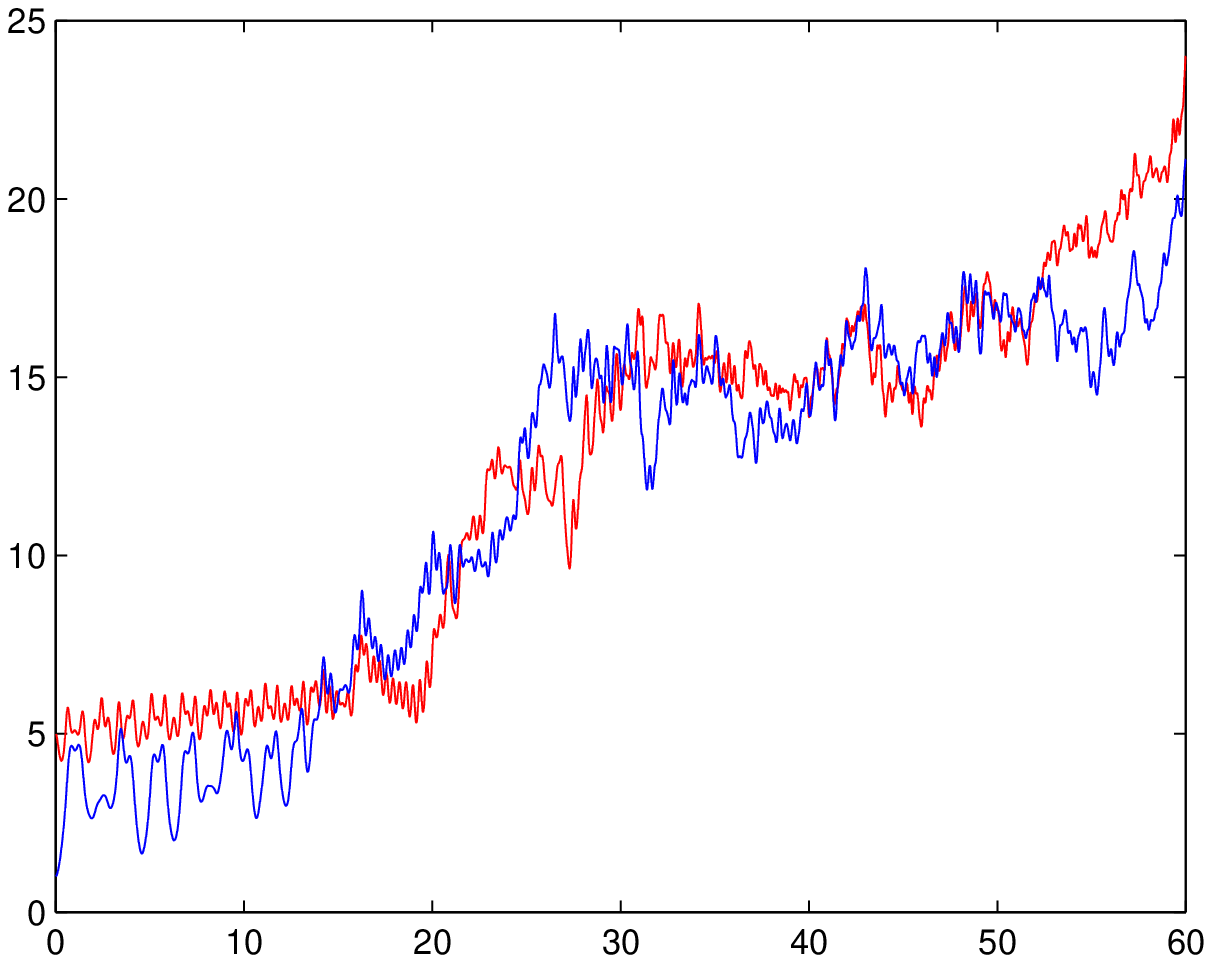}
\end{center}
\caption{$\ell^{\infty}$ (left) and $\ell^2$ (right) norms of the solution for $A$ (red) and $B$ (blue) for $\Omega = -4$:
upper panels for $(A_0,B_0) = (1,0.1)$ and lower panels for $(A_0,B_0) = (5,1)$.}
\label{fig-2}
\end{figure}

For $\Omega = -4 < -2 \gamma$,
the result of Theorem \ref{theorem-bound-negative} provides a global bound on the $\ell^2(\mathbb{Z})$ norm of the solution
only in the case of small initial conditions. For smaller initial conditions,
the numerical simulations on the upper panels of Figure \ref{fig-2} do not show drastic differences
compared to the case $\Omega = 8 > 2 \gamma + 4 \epsilon$.
For larger initial conditions, the lower panels of Figure \ref{fig-2} show that the $A$ and $B$ modes mix up and
that the central pendulum excites large oscillations of other pendula in the chain. As a result, the $\ell^2$ norm
of the solution grows whereas the $\ell^{\infty}$ norm stay at the same level as the initial data.

For $\Omega = 0 \in (-2 \gamma, 2 \gamma + 4 \epsilon)$,
the result of Lemma \ref{lemma-equilibrium} implies the linear instability of the zero equilibrium.
Since the energy methods are not useful to control global dynamics of large solutions
far from the zero equilibrium, the numerical simulations give us the way to explore this phenomenon.
Numerical simulation on Figure \ref{fig-3} show that the growth of oscillation amplitudes
stabilizes at a certain level of the $\ell^{\infty}$ norm and that the $\ell^2$ norm of the solution is significantly larger
than the $\ell^{\infty}$ norm. This indicates that many pendula are excited and oscillate with large and
comparable amplitudes with respect to the central pendulum. Since there are no visible differences between the upper and lower panels,
the transition of the system does not depend on how large the initial amplitude
of the central pendulum is.

\begin{figure}[htb]
\begin{center}
\includegraphics[height=5cm]{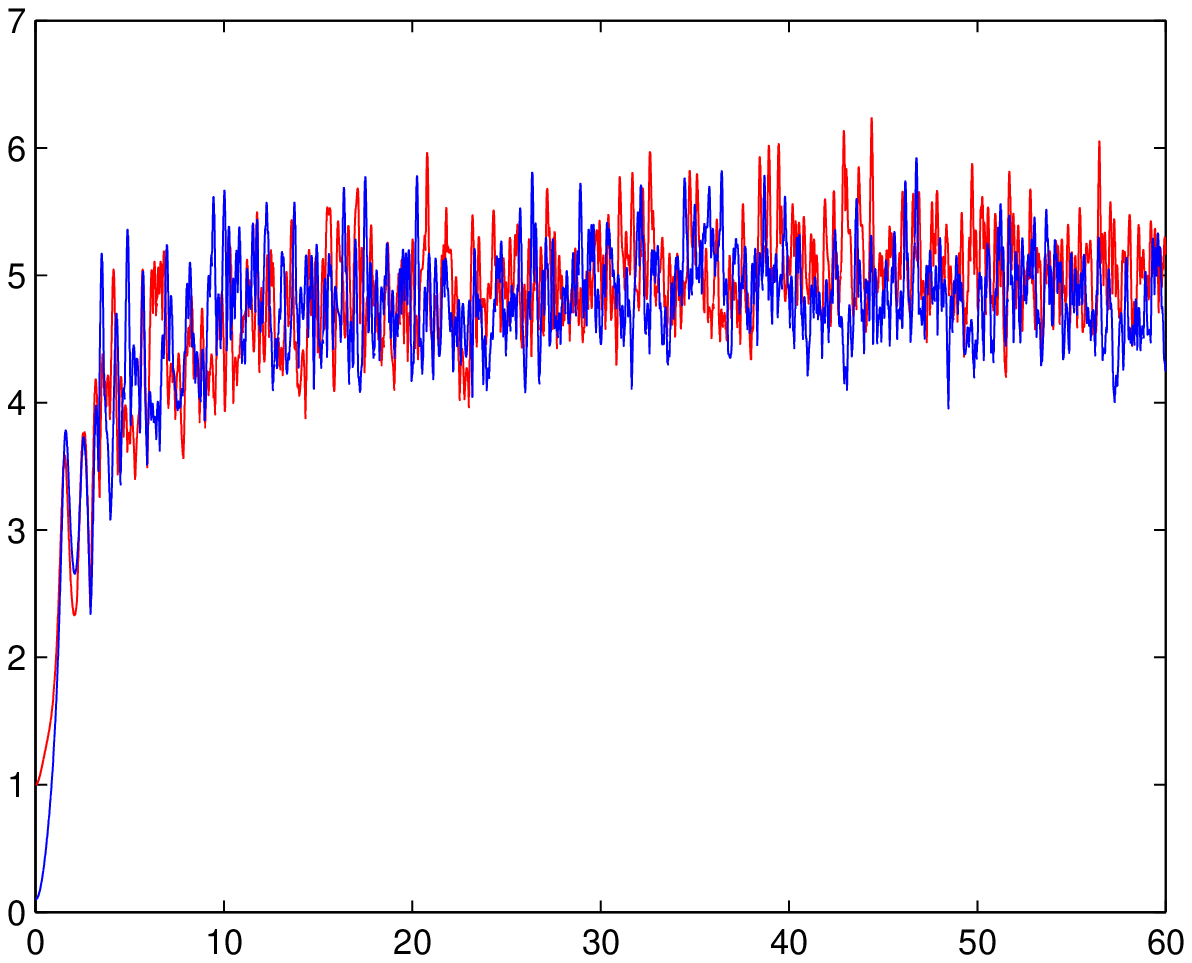}\includegraphics[height=5cm]{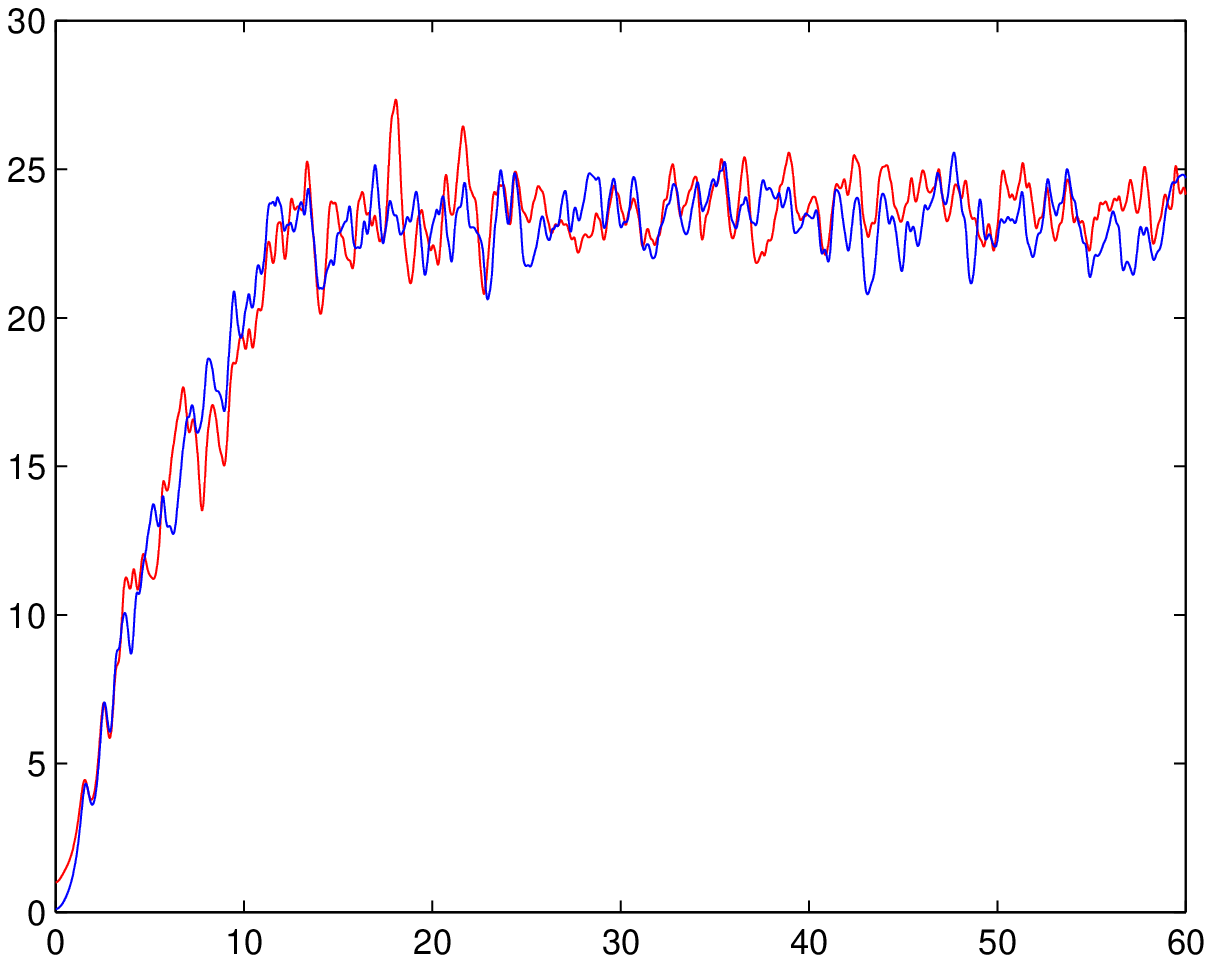}\\
\includegraphics[height=5cm]{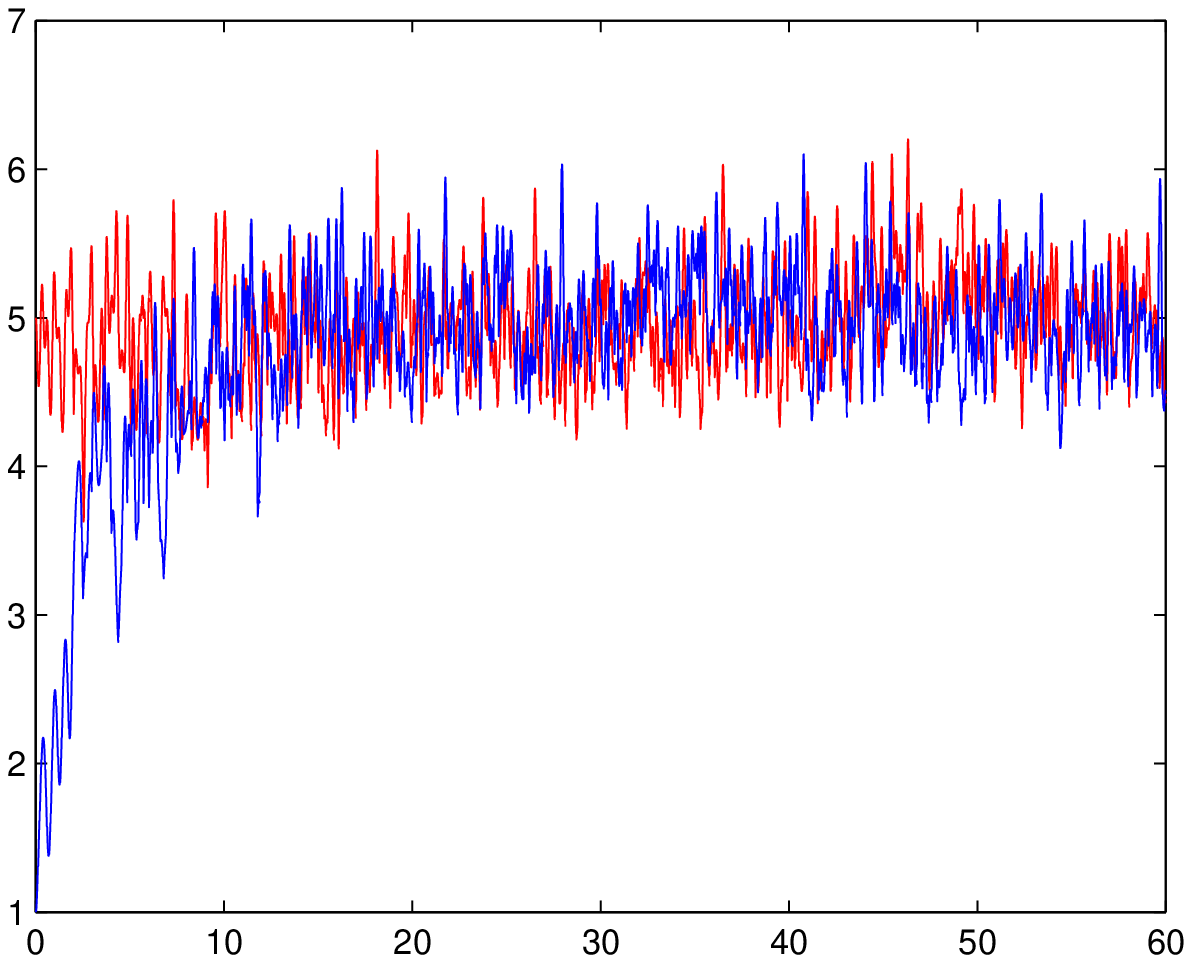}\includegraphics[height=5cm]{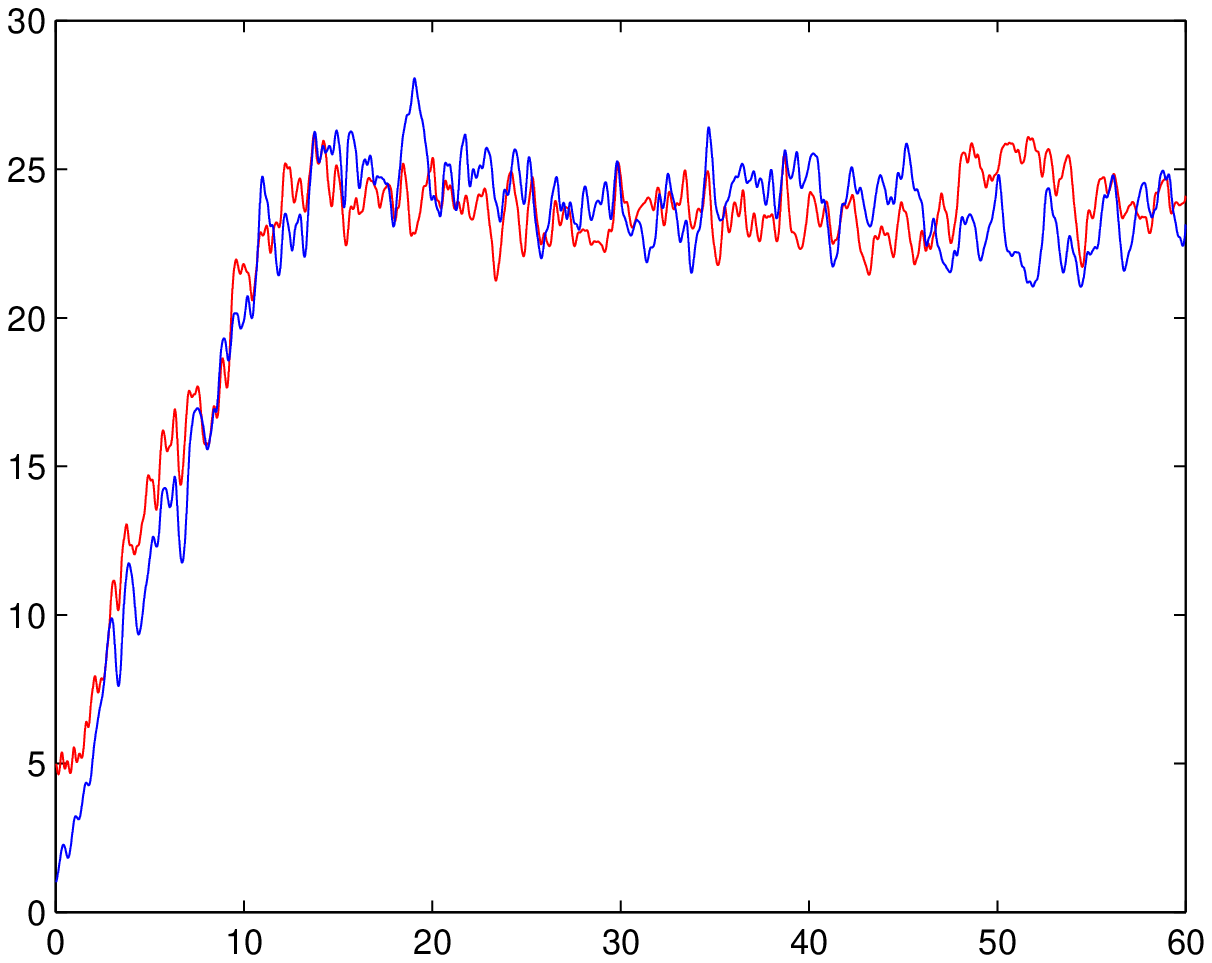}
\end{center}
\caption{$\ell^{\infty}$ (left) and $\ell^2$ (right) norms of the solution for $A$ (red) and $B$ (blue) for $\Omega = 0$:
upper panels for $(A_0,B_0) = (1,0.1)$ and lower panels for $(A_0,B_0) = (5,1)$.}
\label{fig-3}
\end{figure}

By Theorem \ref{theorem-finite}, the solution in the finite chain of $N$ oscillators cannot grow without bounds as $t \to \infty$.
At the same time, the growth of the $\ell^2$ norm of the solution as $t \to \infty$ is not ruled out
for the unbounded lattice in the case $\Omega < 2 \gamma + 4 \epsilon$. In order to illustrate that this
growth does occur, we run numerical experiments, in which we compute the maximal $\ell^2$ norm of the solution
on the time span $[10,50]$ versus $M$, where $N = 2M + 1$ is the number of sites.
Figure \ref{fig-4} (right) shows the nearly monotonic growth of the $\ell^2$ norm (red dots)
of the solution in $M$ in the case $\Omega = 0 \in (-2 \gamma, 2 \gamma + 4 \epsilon)$,
confirming our conjecture on the growth of the solution in the unbounded lattice as $t \to \infty$.
The $\ell^{\infty}$ norm (blue dots) of the solution saturates to the same level independently of $M$.

Figure \ref{fig-4} (left) shows a similar computation for $\Omega = -4 < -2\gamma$ starting with
the smaller initial condition $(A_0,B_0) = (1,0.1)$. In this case, the $\ell^2$ norm of the solution
decreases in $M$ and approaches to a certain limiting level. This indicates that
the growth of the solution on the unbounded lattice does not occur for $\Omega < -2 \gamma$ in the
case of smaller initial amplitude, in agreement with Theorem \ref{theorem-bound-negative}.
Similar dependence (not shown) exists for $\Omega = 8 > 2 \gamma + 4 \epsilon$, when the energy method
excludes growth of the large-norm solutions (see Theorem \ref{theorem-bound}).

\begin{figure}[htb]
\begin{center}
\includegraphics[height=4.5cm]{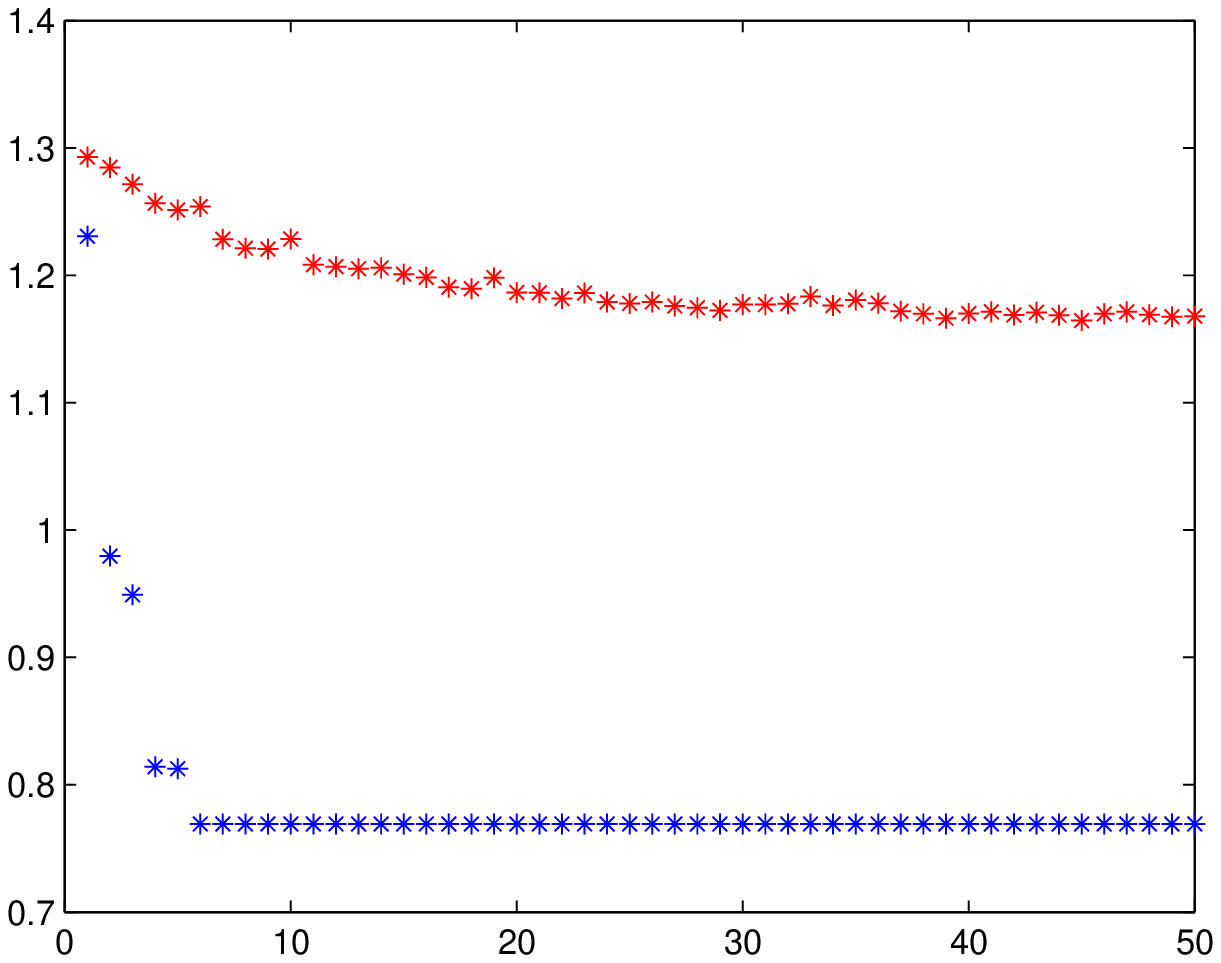}
\includegraphics[height=4.5cm]{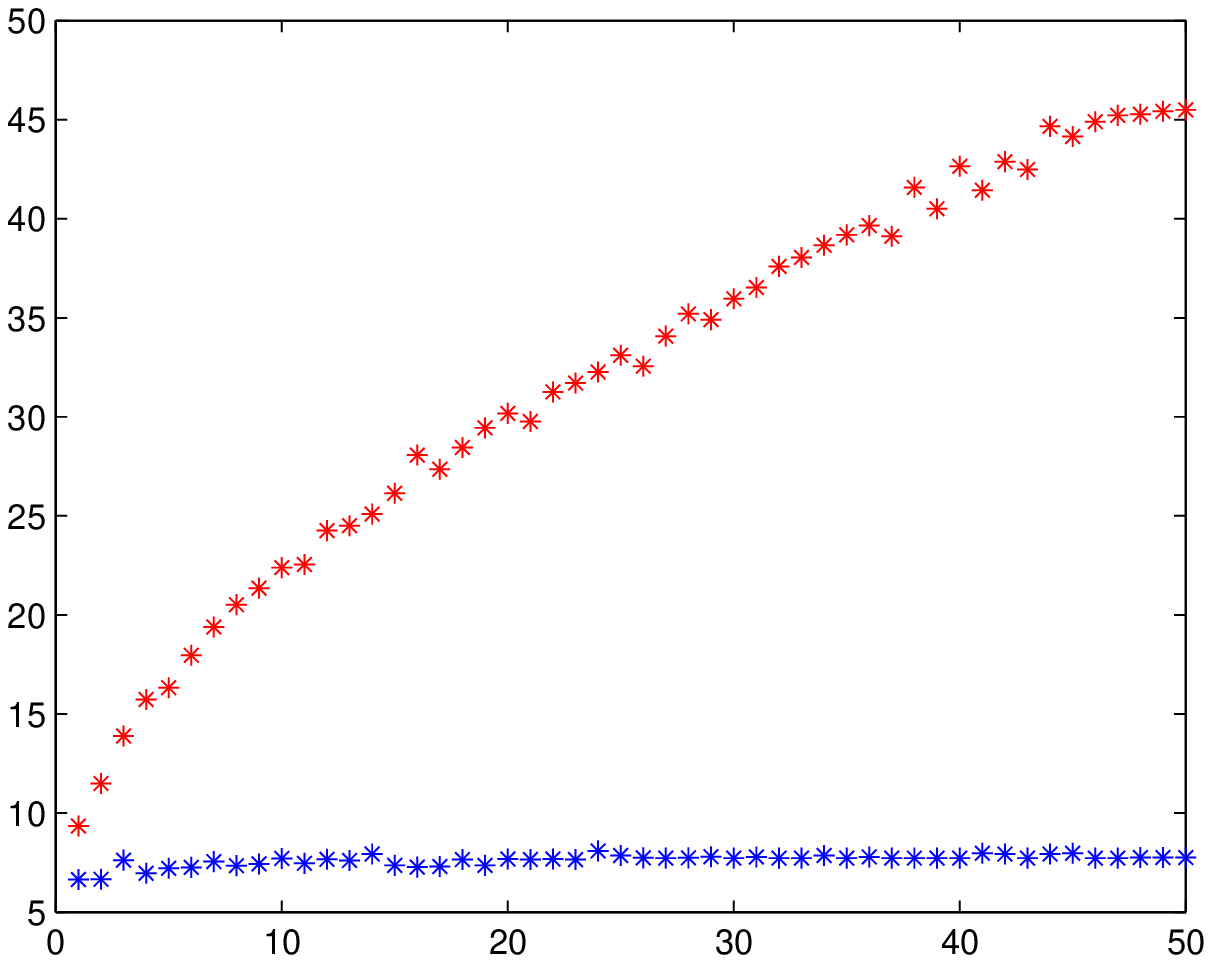}
\end{center}
\caption{Dependence of the $\ell^2$ norm (red dots) and the $\ell^{\infty}$ norm (blue dots)
of the solution averaged over the time interval $[10,50]$ versus
$M$, where $N = 2M+1$ is the number of oscillators, for $\Omega = -4$ (left) and $\Omega = 0$ (right),
in the case $(A_0,B_0) = (1,0.1)$.}
\label{fig-4}
\end{figure}

Similar simulations were carried for initial data with larger amplitudes $(A_0,B_0) = (5,1)$
and they are shown on Figure \ref{fig-5} for $\Omega = 8$ (left) and $\Omega = -4$ (right).
The maximal $\ell^2$ norm is computed on a longer time span $[10,70]$
to provide enough time for the oscillations to be excited in the chain from the central pendulum.
The maximal $\ell^2$ norm does not change versus the number of oscillators $N$
in the case $\Omega = 8 > 2 \gamma + 4 \epsilon$, in agreement with Theorem \ref{theorem-bound}.
At the same time, the maximal $\ell^2$ norm grows with $M$ (and $N$)   for $\Omega = -4 < -2 \gamma$
and apparently diverges as $N \to \infty$. This experiment implies that the limitation of
Theorem \ref{theorem-bound-negative} for small initial data
is not a technical shortfall, the solution $(A,B)$ can grow as $t \to \infty$ in  the unbounded lattice
in the case $\Omega < -2\gamma$.

\begin{figure}[htb]
\begin{center}
\includegraphics[height=4.5cm]{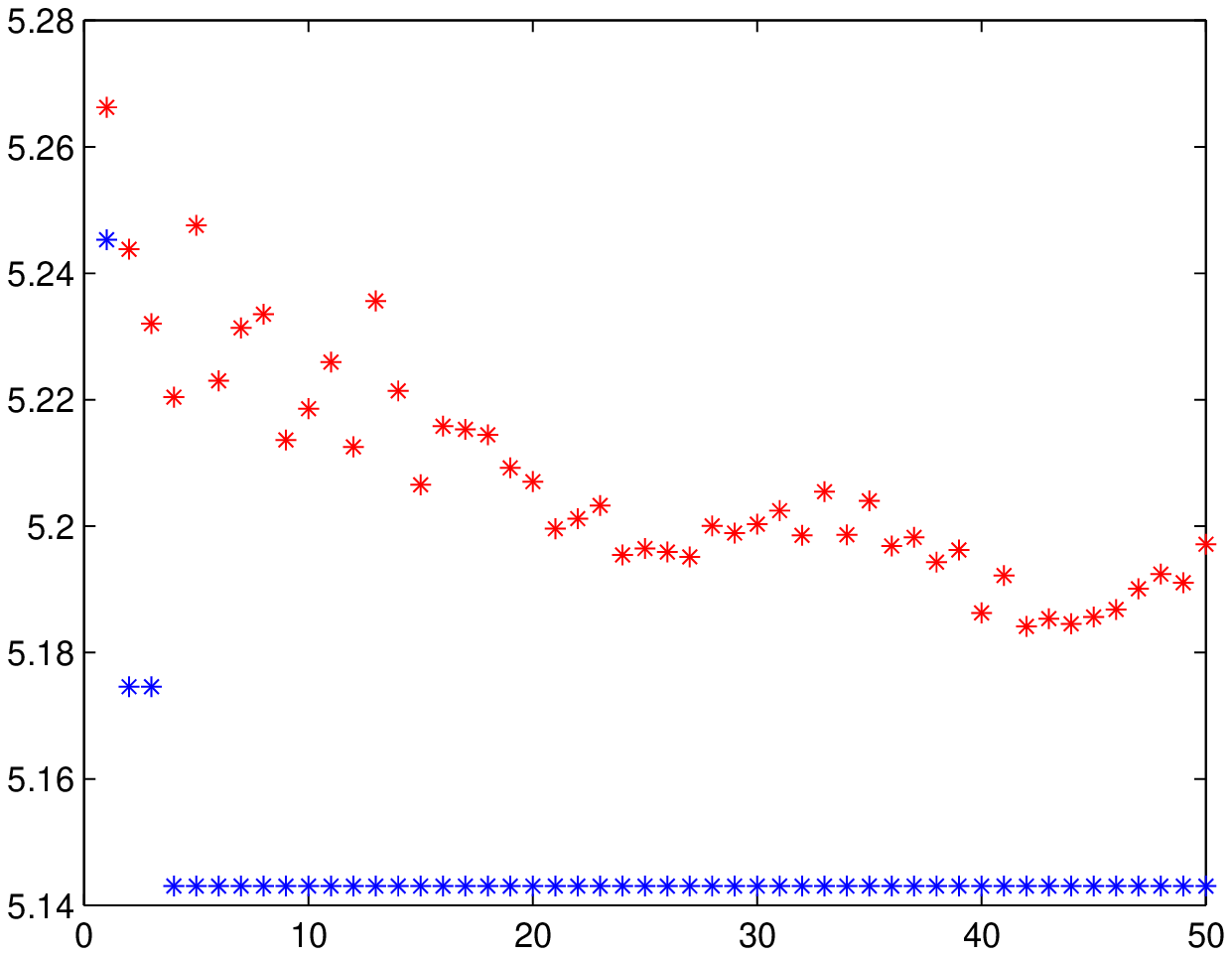}
\includegraphics[height=4.5cm]{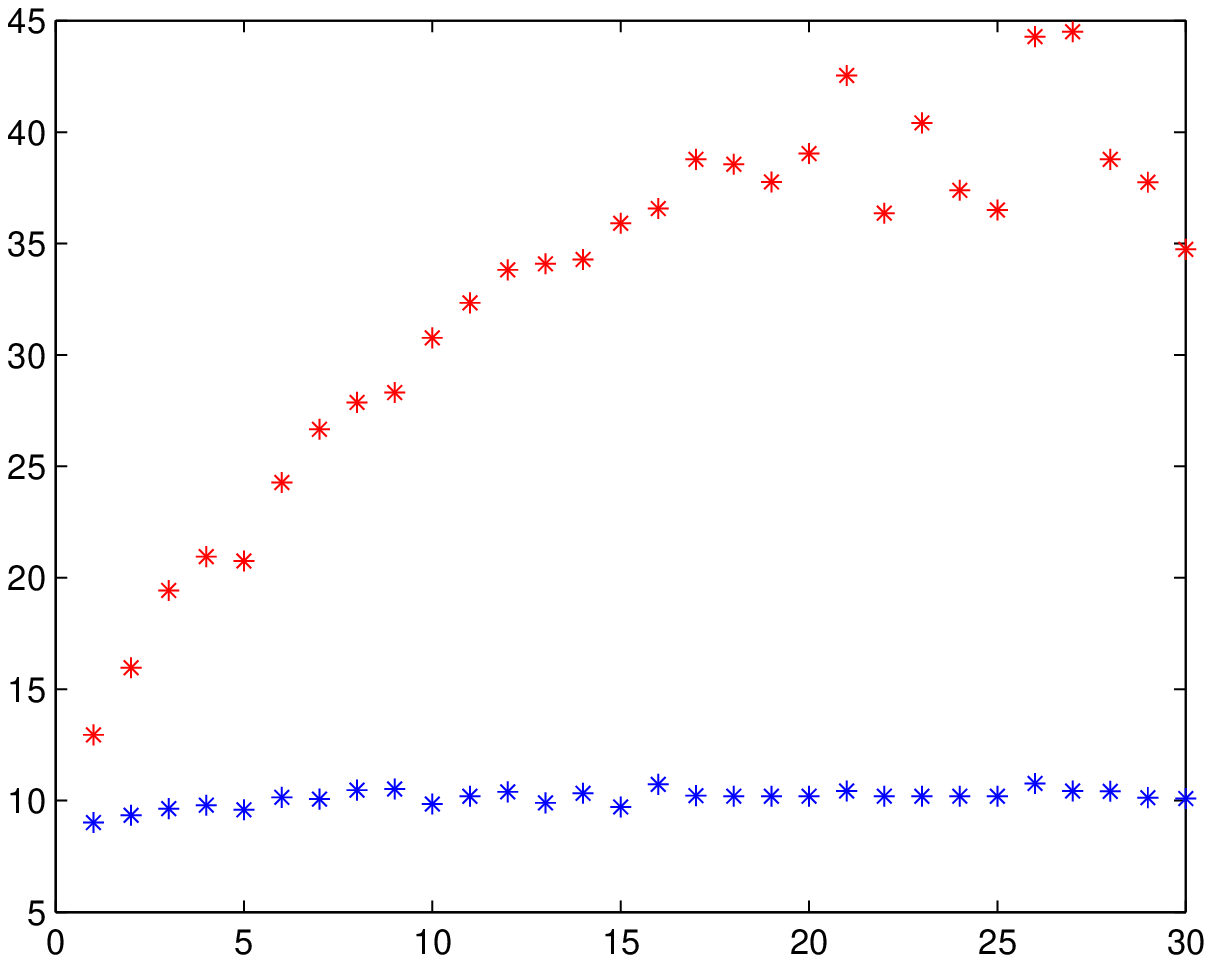}
\end{center}
\caption{Dependence of the $\ell^2$ norm of the solution averaged over the time interval $[10,40]$ versus
$M$, where $N = 2M+1$ is the number of oscillators, for $\Omega = 8$ (left) and $\Omega = -4$ (right),
in the case $(A_0,B_0) = (5,1)$.}
\label{fig-5}
\end{figure}

We note that another numerical solver was also used to approximate dynamics of the dNLS system (\ref{dNLS}).
The alternative solver combines a Crank-Nicholson scheme with the fixed-point iteration method
to solve the coupled system of differential equations and it was implemented in
Python and Fortran. The numerical experiments shown on Figures \ref{fig-1}--\ref{fig-5}
were reproduced on this other solver without any significant differences.

\section{Conclusion}

We have addressed a novel model of the $\mathcal{PT}$-symmetric dNLS equation, which
admits Hamiltonian formulation but exhibits no gauge symmetry. The model describes
dynamics in the chain of coupled pendula pairs with the small diffusive couplings
near the $1:2$ resonance between the parameterically driven force and the linear
frequency of each pendulum.

In the context of the problem of nonlinear stability of zero equilibrium,
we have obtained sharp bounds on the $\ell^2$ norms of the oscillation amplitudes
in terms of parameters of the model and the size of the initial condition.
Stable dynamics of the coupled pendula chains is relatively simple, but it
becomes more interesting when the stability constraints are not satisfied.
In the latter cases, we show that the central pendulum excites nearest pendula such
that the $\ell^2$ norm of the oscillation amplitudes grows while the $\ell^{\infty}$ norm
remains finite.

As the next stage in this work, it may be interesting to characterize the breather
(periodic or quasi-periodic) solutions of the model and to see how existence and stability
of such solutions is related to the nonlinear stability of the zero equilibrium.
Another open question is to address stable and unstable dynamics in the original
Newton's equations of motion and to compare it with predictions of the reduced dNLS model.

\vspace{0.25cm}

\noindent{\bf Acknowledgements.}
This work was developed during the visit of D.E. Pelinovsky to Universit\'{e} des Antilles. D.P. thanks members
of the institute for hospitality during this visit. The authors thank A. Chernyavsky for preparing Fig. \ref{fig-1}.

\end{document}